\newcommand{\Parp}[1]{\Pi_p}
\newcommand{\set}[1]{\left\{{#1}\right\}}
\newcommand{\paren}[1]{\left({#1}\right)}
\newcommand{\card}[1]{\left|{#1}\right|}
\newcommand{\bra}[1]{\left[{#1}\right]}
\let\m=\mu
\newcommand{\N}{\mathbb{N}}
\newcommand{\eps}{\varepsilon}
\newif\ifshort
\renewcommand{\paragraph}[1]{\emph{#1}}
\begin{document}
	
	\ifshort
	\title{Independent Lazy  Better-Response Dynamics on Network Games\thanks{A full version of this work is available online at \cite{full}.}} 
	\else 
	\title{Independent Lazy  Better-Response Dynamics on Network Games} 
	\fi

	\author{Paolo Penna\inst{1} \and Laurent Viennot\inst{2}\thanks{Supported by IRIF (CNRS UMR 8243) and Inria project-team GANG.}}
	
	\institute{Department of Computer Science, ETH Zurich, Switzerland, (\texttt{paolo.penna@inf.ethz.ch}) \and  Inria -- Universit\'e Paris Diderot, France, \\ (\texttt{Laurent.Viennot@inria.fr})}

	\maketitle

	\begin{abstract}
		We study an \emph{independent} best-response dynamics on network games in which the nodes (players) decide to revise their strategies independently with some probability. We provide several bounds on the \emph{convergence time} to an \emph{equilibrium} as a function of  this probability,  the degree of the network, and the potential of the underlying games. These dynamics are somewhat more suitable for distributed environments than the classical better- and best-response dynamics where players revise their strategies ``sequentially'', i.e., no two players revise their strategies simultaneously.
	\end{abstract}

\section{Introduction}
Complex and distributed systems are often modeled by means of \emph{game dynamics} in
which the participants (players) act spontaneously, typically striving  to maximize their own
payoff.   
Such selfish behavior often results in a so-called  (pure Nash) \emph{equilibrium} which, roughly speaking, corresponds to the situation in which no player has an incentive to change her current strategy.\footnote{In this work we consider only pure Nash equilibria, which are the equilibria that occur in certain games when each player chooses one strategy out of the available ones. Other equilibrium concepts are also studied, most notably the mixed Nash equilibrium, where each player chooses a probability distribution over the available strategies.} 

Consider the natural scenario in which people  interact on a (social) network and take their decisions based on both their personal interests and also on what their friends decided. Situations of this sort are often modeled by means of \emph{games} that are played locally by the nodes of some graph (see, e.g., \cite{ellison} and \cite[Chapter 19]{EasKle12}).
For example, players may have to choose between  two  alternatives (strategies), and each strategy becomes more valuable if other friends also choose it (perhaps it is easier to agree than to disagree, or it is better to adopt the same technology for working, rather than different ones).  

In many cases, an extremely simple procedure to convergence to an equilibrium is the so-called  \emph{best-response} dynamics in which at each step one player revises her strategy so to maximize her own payoff (and the  others stay put). These dynamics work in more general settings (not only on network games), where convergence to an equilibrium is proven via a potential argument (every move reduces the value of a global function -- called potential).  Games of this nature are called potential games and they are used to model a variety of situations. Interestingly, this argument \emph{fails} as soon as two or more players \emph{move at the same time}. 

In this work we study a natural variant of best-response dynamics in which we relax the requirement that one player at a time moves. That is, now players become active \emph{independently} with some probability and all active players revise their strategy according to the best-response rule (or more generally any better-response rule). This is similar as before but allowing simultaneous moves. Specifically, we study the \emph{convergence time} of these dynamics when players play on a network a ``local'' potential game: (1) each player interacts only with her neighbors, meaning that the strategies of the non-neighbors do not affect the payoff of this player,  and (2) locally the game is a potential game (see Section~\ref{sec:model} for more details and formal definitions). 

Simple examples show that convergence is impossible if two players are always active (move all the time), or that the time to converge can be made arbitrarily long if they become active at almost every step. At the other extreme, if the probability of becoming active is too small, then the dynamics will also take a long time to converge since almost all the time nothing happens. The trade-off is between having sufficiently many active players and, at the same time, not too many neighboring players moving simultaneously.

\subsection{Our Contribution}
We investigate how the convergence time depends on the probabilities of
becoming active and on the degree of the network. This is also motivated by the
search for simple dynamics that the players can easily implement without
global knowledge of the network (namely, they only need to known how many
neighbors they have), nor without having complex reasoning (they still
myopically better-respond).  We first show that for the symmetric
  coordination game,
the convergence time is polynomial whenever the probability of being active is
slightly below the inverse of the maximum degree of the network
(Theorem~\ref{th:upper_simple} and Corollary~\ref{cor:upper_simple}). This
generalizes to \emph{arbitrary} potential games on graphs, where every node
plays a possibly \emph{different} potential game with each of its neighbors,
and the maximum degree is replaced by a \emph{weighted maximum degree} (see
Theorem~\ref{th:upper_gen}). 
These results indeed hold whenever each
active player uses a \emph{better response} (not necessarily the best response).
Finally, we prove a lower bound saying that, in
general, the probabilities of becoming active \emph{must} depend on the degree
for otherwise the convergence time is \emph{exponential} with high probability
(Theorem~\ref{th:lower-bound} and Corollary~\ref{cor:lower-bound}). Note that
this holds also for the simplest scenario of symmetric coordination games.

Our upper bounds can be seen as a probabilistic version of the potential argument (under certain conditions, the potential decreases in expectation \emph{at every step} by some fixed amount).  To the best of our knowledge, this is the first study on the convergence time of these natural variants of best-response dynamics. Prior studies (see next section) either focus on sufficient conditions to guarantee convergence to Nash equilibria, or they consider \emph{noisy} best-response dynamics whose equilibria can be different from best-response. 

We note that the general upper bound necessarily depends on the 
maximum value of the potential, as these games include \emph{max-cut} games which are PLS-complete \cite{SchYan91}: for such games, no centralized algorithm for computing a Nash equilibrium in time polynomial in the number of players is known, and these games are hard precisely when the potential can assume arbitrarily large values. Obviously, one cannot hope that simple distributed dynamics do much better than the best centralized procedure.

\subsection{Related Work}
Several works study convergence to Nash equilibria for simple variants of best-response dynamics. A first line of research concerns the ability to converge to a Nash equilibrium when the strict schedule of the moves of the players (one player at a time) is relaxed \cite{CauDurGauTou14}; they proved that any ``separable'' schedule guarantees convergence to a Nash equilibrium.  Other works study the convergence time of specific dynamics with limited simultaneous moves:  
\ifshort
\cite{FotKapSpi10} introduce a ``local'' coordination mechanism for congestion games (which are equivalent to potential games \cite{MonSha96}), while \cite{FanMosSko12} shows that with   limited simultaneous moves the dynamics reaches quickly a state whose cost is not too far from the worst Nash equilibrium \cite{FanMosSko12}; Fast convergence can be achieved in certain linear congestion games if \emph{approximate} equilibria are considered \cite{ChiSin15}.
\else
\cite{FotKapSpi10} introduce a ``local'' coordination mechanism for congestion games (which are equivalent to potential games \cite{MonSha96}), while \cite{FanMosSko12} shows that with   limited simultaneous moves the dynamics reaches quickly a state whose cost is not too far from the worst Nash equilibrium \cite{FanMosSko12}; Note that the reached state need not be a Nash equilibrium, and the required condition is that every $T$ consecutive best responses, each player has moved at least once and at most $\beta$ times.  Fast convergence can be achieved in certain linear congestion games if \emph{approximate} equilibria are considered \cite{ChiSin15}, that is, players keep changing their strategies as long as a significant improvement is possible.  
\fi

Another well-studied variant of best-response dynamics is that of \emph{noisy}
or \emph{logit (response)} dynamics \cite{Blu93,Blu98,AloNet10}, where players'
responses is \emph{probabilistic} and determined by a noise parameter (as the
noise tends to zero, players select almost surely best-responses, while for
high noise they respond at random).  These dynamics turn out to behave
differently from ``deterministic'' best-response in many aspects. In the
original logit dynamics by \cite{Blu93,Blu98}, where one randomly chosen player
moves at a time, they essentially rest on a subset of potential
minimizers. When the players' schedule is relaxed, this property is lost and
additional conditions on the game are required
\cite{AloNet10,CauDurGauTou14,AloNet15,FerPen15,Pen15}.  
Our \emph{independent} better-response dynamics  can be seen as an analog of the independent dynamics of \cite{AloNet10} for logit response.

Potential games on graphs (a proper subclass of potential games) are well-studied because of their  many applications. In physics,  ferro-magnetic systems are modeled as \emph{noisy} best-response dynamics on lattice graphs in which every player (node) plays a coordination game with each neighbor (see, for example, \cite{Mar99} and
Chapter 15 of \cite{LevPerWilLee09}). The version in which the coordination game is asymmetric (i.e., coordinating on one strategy is more profitable than another) is used to model the diffusion  of new technologies \cite{MonSab10,KreYou14} and opinions \cite{FerGolVen12} in social networks. 
Finally, potential games on graphs (every node plays some potential game with each neighbor)
characterize the class of potential games for which the equilibria of \emph{noisy} best-response dynamics with \emph{all} players updating simultaneously can be ``easily'' computed \cite{AulFerPasPenPer15}. 
The convergence time of best-response dynamics for games on graphs is studied in \cite{DyeMoh11,FerGolVen12}: Among other results, \cite{DyeMoh11} showed that a polynomial number of steps are sufficient when the same game is played on all edges and the number of strategies is constant. Analogous results are proven for finite opinion games in \cite{FerGolVen12}. Finally, \cite{BabTam14} characterize the class of potential games which are also \emph{graphical} games \cite{Kea01}, where the potential can be decomposed into the sum of potentials of ``maximal'' cliques of an underlying graph. Graphical games have been studied in several works (see, e.g., \cite{DasPapICALP09,YanDasSODA11,PilShaSODA14,BaiPilEC18}). The class of \emph{local interaction} potential games \cite{AulFerPasPenPer15} is the restriction in which the potential can be decomposed into pairwise (edge) potential games. In this work we deal precisely with this class of games. 
Since this class includes the so-called max-cut games, which are known to be PLS-complete \cite{SchYan91}, it is considered unlikely that an equilibrium can be computed efficiently, even by a centralized procedure.

Our dynamics are similar to the $\alpha$-synchronous dynamics  in  cellular automata~\cite{FatesRST06}. 
\ifshort
In particular, the case of symmetric coordination game corresponds to majority rule on general graphs~\cite{RouquierRT11} (where each cellular automaton tries
to switch to the majority state of its neighbors, and stays put in case of ties). The present work can be seen as a first study of
$\alpha$-synchronous dynamics on general graphs for the rules that follow from
best-response to some potential games with neighbors. 
\else
This latter model encompasses
ours as long as the response strategy can be implemented through an
automata. This is the case for example with symmetric coordination where
best-response corresponds to majority rule (where each cellular automaton tries
to switch to the majority state of its neighbors).  However, most of the
literature around cellular automata considers a regular lattice topology and do
not cover the case of general graphs we consider here.  A notable exception is
the study of minority rule on general graphs~\cite{RouquierRT11} with similar
dynamics as ours.  The present work can be seen as a first study of
$\alpha$-synchronous dynamics on general graphs for the rules that follow from
best-response to some potential games with neighbors. Our results apply in
particular to majority and minority rules (where no action is taken in case of
equality, i.e, when the two most frequent states are equally represented among
neighbors).
\fi

\section{Model (Local Interaction Potential Games)}\label{sec:model}

Intuitively speaking we consider a network (graph) where each node is a player
who repeatedly plays with her neighbors.  We assume that a two-player potential
game (defined below) is associated to each edge of the graph. Each player must play the same
strategy on all the games associated to its incident edges, and her payoff is the
sum of the payoffs obtained in each of these games. 
We also assume finite strategies, i.e. each player chooses her strategy within a finite set.

\paragraph{Symmetric Coordination Game.} 
One of the simplest (potential) games is the \emph{symmetric coordination game} where each player
chooses color $B$ or $W$ (for black or white) and her payoff is $1$ if players agree on their strategies, and $0$ otherwise (see Figure~\ref{fig:symmetric-coordination} where the two numbers are the payoff for the row and the column player, respectively).

 \begin{figure}[tb]
 	\begin{subfigure}[b]{0.24\textwidth}
 		\centering
 		\begin{game}{2}{2}[][][]
 			& $B$ & $W$ \\
 			$B$ & $1,1$    & $0,0$\\
 			$W$ & $0,0$    & $1,1$
 		\end{game}
 		\ifshort
 		\vspace{-.5cm}
 		\fi
 		\caption{Symmetric \\ Coordination Game.}
 		\label{fig:symmetric-coordination}
 	\end{subfigure}
 	\hfill
 	\begin{subfigure}[b]{0.24\textwidth}
 		\centering
 		\begin{game}{2}{2}[][][]
 		& $B$ & $W$ \\
 	$B$ & $2,1$    & $0,0$\\
 	$W$ & $0,0$    & $1,2$
	 	\end{game}
	 	\ifshort
	 	\vspace{-.5cm}
	 	\fi
 		\caption{Another \\ Coordination Game.}
 		\label{fig:coordination}
 	\end{subfigure}
 	\hfill
 	\begin{subfigure}[b]{0.24\textwidth}
 		\centering
 		\begin{game}{2}{2}[][][]
 							& $B$ & $W$ \\
 							$B$ & $-2$    & $-1$\\
 							$W$ & $0$    & $-2$
 						\end{game}
 						\ifshort
 						\vspace{-.5cm}
 						\fi
 		\caption{Potential \\  for Game \subref{fig:coordination}.}
 		\label{fig:potential}
 	\end{subfigure}
 	\hfill
 	\begin{subfigure}[b]{0.24\textwidth}
 		\centering
 		\begin{game}{2}{2}[][][]
 							& $B$ & $W$ \\
 							$B$ & $2,2$    & $1,1$\\
 							$W$ & $0,0$    & $2,2$
 						\end{game}
 						\ifshort
 						\vspace{-.5cm}
 						\fi
 		\caption{Game equivalent to Game \subref{fig:coordination}.}
 		\label{fig:equiv}
 	\end{subfigure}
 	\ifshort
 	\caption{Examples of two-player games and potential function.}
 	\else 
 	\caption{Examples of (two-players) potential games. The game  \subref{fig:coordination} has potential function \subref{fig:potential}. Game \subref{fig:equiv} has same potential function \subref{fig:potential} (signs changed for payoff), it is equivalent to Game \subref{fig:coordination} (both games have same dynamics).}
 	\fi
 \end{figure}

\paragraph{General Potential Games.} 
In a general game, we have $n$ players, and each of them can choose one color (strategy) and the combination $c=(c_1,\ldots,c_n)$ of all colors gives to each player $u$ some payoff $PAY_u(c)$. 
In a \emph{potential game}, when the change in the payoff of any player improves by some amount, some global function $P$ called the \emph{potential} will be decreased by the same amount: For any player $u$ and any two configurations $c$ and $c'$ which differ only in $u$'s strategy, it holds that 
\begin{equation}
\label{eq:potential-games}
PAY_u(c') - PAY_u(c) = P(c) - P(c') \enspace .
\end{equation}

A configuration $c$ is a \emph{(pure Nash) equilibrium} if no player $u$ can improve her payoff, that is, the quantity above is negative or zero for all $c'=(c_1,\ldots,c_u',\ldots,c_n)$. Conversely, $c$ is not an equilibrium if there is a player $u$ who can improve her payoff ($PAY_u(c') - PAY_u(c)>0$) in which case $c_u'$ is called a \emph{better response} (to strategies $c$). A \emph{best response} is a better response maximizing this improvement, over the possible strategies of the player. 
Potential games possess the following nice feature: A configuration $c$ is an equilibrium if and only if no player can improve the potential function by changing her current strategy.  In a general (two-player) potential game the payoff of the players is not the same, and the potential function is therefore not symmetric (see the example in Figure~\ref{fig:potential}). 

\paragraph{Local interaction potential games \cite{AulFerPasPenPer15}.}
In a local interaction potential game the potential function can be decomposed into the sum of two-player potential games, one for each edge of the network $G$:
\begin{equation}\label{eq:local-interacion-potential}
P(c) = \sum_{uv \in E(G)} P_{uv}(c_u,c_v)\enspace .
\end{equation}
No edge exists if the strategies of the two players do not affect each others' payoff (the corresponding potential is constant and can be ignored). 
This definition captures the following natural class of games on networks: Each edge corresponds to some potential game, and the payoff of a player is the sum of the payoffs of the games with the neighbors. Note that a player chooses one strategy to be played on all these games.

\paragraph{(Independent) Better-Response Dynamics.} A simple procedure for computing an equilibrium consists of repeatedly selecting \emph{one} player who is currently not playing a best response and let her play a better or best response. Every step reduces the potential by a finite amount, and therefore this procedure terminates into an equilibrium in $O(M)$ time steps, where $M$ is the maximum value for the potential (w.l.o.g., we assume that the potential is always non-negative and takes integer values\footnote{As we assume that strategy sets are finite, the potential function is defined by a finite set of values. Rescaling the potential function so that different values are at least 1 apart, and then truncating the values to integers allows to obtain an equivalent game (with same dynamics). Additionally shifting the values allows to obtain a non-negative potential function for that game.}). Here we consider the variant in which, at each time step, each player becomes \emph{independently} active according to some probability, and those who can improve their payoff change strategy accordingly: 

\ifshort
\begin{definition}
	In independent better-response dynamics, at each time  step $t $ players do the following: 
	
	\noindent
	-- Each player (node) $u$ becomes active with some probability $p_u^t$ which can change over time 
	(the case in which it is constant over time is a special case of this one).
	
	\noindent
	-- Every active player (node) revises her strategy according to a better (or best) response rule. If the current strategy is already a best response, then no change is made. 
\end{definition}
\else
\begin{definition}
	In independent better-response dynamics, at each time  step $t $ players do the following: \begin{itemize}
		\item Each player (node) $u$ becomes active with some probability $p_u^t$ which can change over time 
(the case in which it is constant over time is a special case of this one).
		\item Every active player (node) revises her strategy according to a better (or best) response rule. If the current strategy is already a best response, then no change is made. 
	\end{itemize}
\end{definition}
\fi

Note that all players that are active at a certain time step may change their strategies \emph{simultaneously}. So, for example, it may happen that on the symmetric coordination game in Figure~\ref{fig:symmetric-coordination} the two players move from state $BW$ to state $WB$ and back if they are both active all the time.

\ifshort
\paragraph{Generic upper bound.}
\else
\subsubsection{Generic upper bound}
\fi

To show that dynamics converge quickly, we show that the potential decreases in expectation at every step. To this end, we consider the probability space
of all possible evolutions of the dynamics. A configuration $c$ at a given
time $t$ is given by the colors chosen by players at the previous time step
(strategy profile) and by the values $p_u^t$ used by users for randomly deciding to be active at time $t$.
The universe $\Omega$ is then defined
as the set of
all infinite sequences $c^0,c^1,\ldots$ of configurations.

\begin{definition}[$\delta$-improving dynamics]
	Dynamics are $\delta$-improving for a given (local interaction) potential game if in expectation the potential decreases by at least $\delta$ during each time step, unless the current configuration is an equilibrium. That is, for any 
	configuration $c$ which is not an equilibrium, and any event 
	$F_c^t=\{c^0,c^1,\ldots\in\Omega \mid c^t = c\}$
	where configuration $c$ is reached at time $t$, we have
	\[E[P^{t+1}-P^t \mid F_c^t] \le -\delta\]
	where $P^t$ denotes the potential at time $t$.
\end{definition}

\ifshort
Standard Martingale arguments imply the following  (see \cite{full} for details):
\else
The proof of the following theorem is based on standard Martingale arguments  and it is given in Appendix~\ref{sec:proof:th_upper_any_game} for completeness.
\fi

\begin{theorem}
	\label{th:upper_any_game}
	The expected convergence time of any $\delta$-improving dynamics is $O\left(\frac{M_0}{\delta}\right)$ where $M_0$ is the expected potential of the game at time 0.
\end{theorem}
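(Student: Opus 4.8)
The plan is to introduce the random \emph{convergence time} $\tau$, defined as the first time step $t$ at which $c^t$ is an equilibrium, and to prove the sharper statement $E[\tau]\le M_0/\delta$ by exhibiting a non-negative supermartingale built from the potential. The first ingredient I would record is the elementary observation that an equilibrium is \emph{absorbing} for the dynamics: at an equilibrium no active player changes her strategy (she already best-responds), so once $c^t$ is an equilibrium every later strategy profile, and hence every later potential value, equals that of $c^t$. Consequently $\{\tau>t\}$ is exactly the event ``$c^t$ is not an equilibrium'', and it decomposes as the \emph{disjoint} union $\{\tau>t\}=\bigsqcup_{c}F_c^t$ over all non-equilibrium configurations $c$ --- precisely the form in which the $\delta$-improving hypothesis can be invoked.

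Next I would consider the stopped process $M^t:=P^{t\wedge\tau}+\delta\,(t\wedge\tau)$, where $t\wedge\tau:=\min\{t,\tau\}$, and show that $E[M^{t+1}]\le E[M^t]$ for every $t$. On $\{\tau\le t\}$ both the potential and the ``clock'' $t\wedge\tau$ are frozen, so $M^{t+1}=M^t$ there. On $\{\tau>t\}$ we have $M^{t+1}-M^t=(P^{t+1}-P^t)+\delta$, and summing the $\delta$-improving inequality over the decomposition above gives
\[
E\bigl[(P^{t+1}-P^t)\,\mathbf 1_{\{\tau>t\}}\bigr]=\sum_{c}\Pr[F_c^t]\,E\bigl[P^{t+1}-P^t\mid F_c^t\bigr]\le -\delta\,\Pr[\tau>t].
\]
Adding the contribution $+\delta\,\Pr[\tau>t]$ from the $\delta$ term, the two cases combine into $E[M^{t+1}-M^t]\le 0$; in fact $(M^t)$ is a genuine supermartingale for the natural filtration, so one may alternatively just quote the optional stopping / Dynkin argument.

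Finally I would unwind the estimate. Since the potential is non-negative, $M^t\ge \delta\,(t\wedge\tau)$, and iterating the previous inequality yields $\delta\,E[t\wedge\tau]\le E[M^t]\le E[M^0]=E[P^0]=M_0$, i.e.\ $E[t\wedge\tau]\le M_0/\delta$ for all $t$. Letting $t\to\infty$ and applying monotone convergence gives $E[\tau]\le M_0/\delta$, which in particular shows $\tau<\infty$ almost surely, and this is the announced $O(M_0/\delta)$ bound.

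The proof is short, and the only point that needs care is the interplay between the stopping time $\tau$ and the conditioning in the definition of $\delta$-improving: one must observe that $\{\tau>t\}$ is a (countable) disjoint union of the events $F_c^t$ over non-equilibrium $c$, so that the hypothesis --- which conditions only on the current configuration --- applies verbatim on each piece. It is convenient here that a configuration is taken to record also the current activation probabilities $p_u^t$, so that conditioning on $F_c^t$ genuinely pins down the law of $P^{t+1}-P^t$ and $(M^t)$ is a supermartingale in the strict sense. Everything else is a routine manipulation of expectations.
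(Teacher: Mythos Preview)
Your argument is correct and rests on the same supermartingale $P^{t}+\delta t$ (stopped at the hitting time of the equilibrium set) that the paper uses. The route differs in one useful respect: the paper first proves a stand-alone lemma (Lemma~\ref{lem:martingale}) which requires establishing $E[T]<\infty$ by a separate geometric argument --- bounding from below the probability that the potential drops by $\eps/2$ in one step, then iterating --- before invoking Doob's optional stopping theorem on $Y^t=X^t+\eps t$. You bypass both the finiteness step and the explicit appeal to Doob by working with the truncated process $M^t=P^{t\wedge\tau}+\delta(t\wedge\tau)$, showing $E[M^{t+1}]\le E[M^t]$ directly, and then letting $t\to\infty$ via monotone convergence; finiteness of $E[\tau]$ then falls out of the bound rather than being a prerequisite. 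This is shorter and more elementary. Your observation that $\{\tau>t\}$ coincides with $\bigsqcup_c F_c^t$ over non-equilibrium $c$ because equilibria are absorbing is exactly the bridge the paper also needs (and makes) between the Markov-style conditioning in the definition of $\delta$-improving and the full-history conditioning in its lemma.
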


\section{Networks With Symmetric Coordination Games} 
\label{sec:upper_simple}

We first consider the scenario in which every edge of the network is the  symmetric coordination game in Figure~\ref{fig:symmetric-coordination}.  
The nodes of a graph $G$ (players)
can choose between two colors $B$ and $W$ and are rewarded according to the 
number of neighbors with same color. We are  thus
considering the dynamics in which nodes attempt to
choose the \emph{majority color of their neighbors} and every active node changes its color if more than half of its neighbors has the different color.

In order to analyze the convergence time of these
dynamics, we shall relate the probabilities of being active to the number of
neighbors having a different color. We say that $u$ is \emph{unstable} at time
$t$ if more than half of the neighbors has the other color, that is,
\[
dc_u^t>\frac{1}{2}\delta_u
\] 
where $\delta_u$ is the degree of $u$ and $dc_u^t$ is the number of neighbors of $u$ that have a color different from the color of $u$ at time $t$.
By definition, the dynamics converge if no node is unstable. 
Note that we have $dc_u^t\le \delta_u\le \Delta_u\le \Delta$ where
$\Delta=\max_{u\in V(G)}\delta_u$ is the maximum degree of the graph,
and $\Delta_u=\max_{uv\in E(G)}\delta_v$ is the local maximum degree in
the neighborhood of $u$.

For the case of symmetric coordination games, the potential function of a configuration is the number of edges whose endpoints have different colors:
An edge $uv$ is said to be \emph{conflicting} in configuration $c$
if $u$ and $v$ have different colors.    Therefore the potential is at most the number $m$ of edges.

\begin{theorem}
	\label{th:upper_simple}
	Fix some real values $p,q\in(0,1)$.
	If we have $p_u^t \in [\frac{p}{\Delta}, \frac{q}{\Delta_u}]$ for all $u,t$ in a symmetric coordination game, then
	the expected convergence time is $O\left(\frac{\Delta m_0}{p(1-q)}\right)$ where $m_0$ is the initial number of conflicting edges,
$\Delta$ is the maximum degree, and
$\Delta_u$ is the maximum degree in the neighborhood of $u$.
\end{theorem}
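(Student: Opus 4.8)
The plan is to show that, under the stated hypotheses, the independent better-response dynamics are \emph{$\delta$-improving} (in the sense of the definition preceding Theorem~\ref{th:upper_any_game}) with $\delta=\frac{p(1-q)}{\Delta}$, and then to invoke Theorem~\ref{th:upper_any_game}. Here the potential of a configuration is the number of conflicting edges, so the expected potential at time $0$ is $m_0$ (at most $m_0$ if the start is randomized), and Theorem~\ref{th:upper_any_game} then gives expected convergence time $O\!\left(\frac{m_0}{\delta}\right)=O\!\left(\frac{\Delta m_0}{p(1-q)}\right)$, which is exactly the claim.

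The heart of the argument is a single-step computation. Fix a non-equilibrium configuration $c$ and an event $F_c^t$; conditioning on $F_c^t$ fixes both the colors at time $t$ \emph{and} the activation probabilities $p_u^t$, while the activations at time $t$ are fresh \emph{independent} coin flips. An active node changes color exactly when it is unstable, so node $u$ flips (at time $t$) with probability $f_u:=p_u^t$ if $u$ is unstable in $c$ and $f_u:=0$ otherwise, and distinct nodes flip independently. For an edge $uv$: if neither endpoint flips, nothing changes; if \emph{both} flip, the two colors are merely swapped and the edge keeps its conflicting/non-conflicting status; only when \emph{exactly one} endpoint flips does the edge toggle. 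Hence, writing $s_{uv}=-1$ if $uv$ is conflicting in $c$ and $s_{uv}=+1$ otherwise, and regrouping the resulting ordered-pair sum by its first coordinate,
\[
E[P^{t+1}-P^t \mid F_c^t] \;=\; \sum_{uv\in E(G)} s_{uv}\bigl(f_u(1-f_v)+f_v(1-f_u)\bigr) \;=\; \sum_{u\in V(G)} f_u \sum_{v\sim u} s_{uv}(1-f_v).
\]

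It then remains to bound the inner sum for each \emph{unstable} node $u$ (stable nodes contribute $0$ since $f_u=0$). Let $d=dc_u^t$ be the number of conflicting neighbors of $u$; instability means $d>\delta_u/2$, i.e. $2d-\delta_u\ge 1$. The key point is that every neighbor $v$ of $u$ satisfies $\Delta_v\ge\delta_u$ (because $u$ is itself a neighbor of $v$), hence $f_v\le p_v^t\le q/\Delta_v\le q/\delta_u$. Using $1-f_v\le 1$ on the $\delta_u-d$ non-conflicting neighbors and $1-f_v\ge 1-q/\delta_u$ on the $d$ conflicting ones,
\[
\sum_{v\sim u} s_{uv}(1-f_v) \;\le\; -d\Bigl(1-\tfrac{q}{\delta_u}\Bigr)+(\delta_u-d) \;=\; \delta_u-2d+\tfrac{d}{\delta_u}q \;\le\; -(1-q),
\]
where the last inequality uses $d\le\delta_u$. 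Plugging this back, using $f_u=p_u^t\ge p/\Delta$ for every unstable $u$, and using that a non-equilibrium configuration has at least one unstable node, yields $E[P^{t+1}-P^t\mid F_c^t]\le -(1-q)\frac{p}{\Delta}$. So the dynamics are $\delta$-improving with $\delta=\frac{p(1-q)}{\Delta}$, completing the proof.

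I expect the step requiring the most care to be the edge-accounting under simultaneous moves: one must notice that the only edges whose conflict status changes are those with exactly one flipping endpoint — the ``both flip'' cancellation is precisely what keeps the expectation tractable — and one must exploit the \emph{independence} of activations to collapse the edge sum into the clean product form above. A secondary subtlety is getting the constant right: the crude estimate $f_v\le q$ is too weak (the ``$+q$ error'' would not be controlled), and one genuinely needs $f_v\le q/\delta_u$ via $\Delta_v\ge\delta_u$ so that this error is absorbed by the integrality slack $2d-\delta_u\ge1$.
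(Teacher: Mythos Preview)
Your proof is correct and follows essentially the same route as the paper's: both compute the expected one-step potential drop edge by edge, regroup the sum by unstable node, and exploit the key observation $\Delta_v\ge\delta_u$ (so $p_v^t\le q/\delta_u$) together with the integrality slack $2d-\delta_u\ge 1$ to obtain a $\frac{p(1-q)}{\Delta}$-improving bound, whence Theorem~\ref{th:upper_any_game} yields the claim. The only cosmetic difference is that the paper partitions edges into types $S_1,S_2,C_1,C_2$ before regrouping, whereas you carry the sign $s_{uv}$ and flip probabilities $f_u$ directly---the algebra and the constants are identical.
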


As an immediate corollary, we have the following result for the case in which all nodes are
active with the \emph{same} probability probability $p$.

\begin{corollary}
	\label{cor:upper_simple}
	If all unstable nodes are active with probability $p<\frac{1-\eps}{\Delta}$
	for $\eps > 0$, then
	the dynamics converge to a stable state in $O(\frac{m_0}{p\eps})$ expected time.
\end{corollary}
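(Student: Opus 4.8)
Corollary~\ref{cor:upper_simple} is just a specialization of Theorem~\ref{th:upper_simple}, so the plan for it is a one-line parameter choice; the substance lies in Theorem~\ref{th:upper_simple}, whose proof I sketch below. For the corollary: given $p<\frac{1-\eps}{\Delta}$, I would invoke Theorem~\ref{th:upper_simple} with both of its free parameters set to $\hat p=\hat q=p\Delta$. Then $\hat p,\hat q\in(0,1)$ because $p\Delta<1-\eps<1$, and taking $p_u^t=p$ for every unstable node meets the requirement $p_u^t\in[\hat p/\Delta,\hat q/\Delta_u]=[p,\,p\Delta/\Delta_u]$ since $\Delta_u\le\Delta$. (The probabilities assigned to stable nodes are irrelevant, since a stable node never changes color under a better-response rule; this is also why the corollary only constrains unstable nodes.) The theorem then yields convergence time $O\!\left(\frac{\Delta m_0}{\hat p(1-\hat q)}\right)=O\!\left(\frac{m_0}{p(1-p\Delta)}\right)$, and $1-p\Delta>\eps$ gives the claimed $O(\frac{m_0}{p\eps})$.

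To prove Theorem~\ref{th:upper_simple}, the plan is to show that the dynamics are $\delta$-improving with $\delta=\frac{p(1-q)}{\Delta}$ and then quote Theorem~\ref{th:upper_any_game} (with $M_0=m_0$, since here the potential is exactly the number of conflicting edges). So fix a non-equilibrium configuration $c$ and a time $t$, let $U\neq\emptyset$ be its set of unstable nodes, and note that under the dynamics each $u\in U$ flips its color independently with probability $p_u:=p_u^t$, while no other node moves. The elementary fact to record first is that an edge $uv$ switches its conflict status precisely when exactly one of $u,v$ flips; hence, writing $A_u$ for the indicator that $u$ flips, $a_u:=\Pr[A_u=1]$ (so $a_u=p_u$ for $u\in U$ and $a_u=0$ otherwise), and $s_{uv}=-1$ if $uv$ is currently conflicting and $s_{uv}=+1$ otherwise, independence of the $A_u$ gives
\[
E[P^{t+1}-P^t \mid F_c^t]=\sum_{uv\in E}s_{uv}\,\Pr[A_u\neq A_v]=\sum_{uv\in E}s_{uv}\,(a_u+a_v-2a_ua_v).
\]

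I would then split this into a linear and a quadratic part. The linear part rearranges to $\sum_u a_u\sum_{v\sim u}s_{uv}=-\sum_u a_u(2\,dc_u-\delta_u)$, which is at most $-\sum_{u\in U}p_u$ because every unstable node satisfies $2\,dc_u-\delta_u\ge 1$ (integrality plus $dc_u>\delta_u/2$) and $a_u=0$ for stable $u$; this is the familiar ``one move decreases the potential by at least one'' bookkeeping. The quadratic part has absolute value at most $2\sum_{uv\in E}a_ua_v=\sum_u a_u\sum_{v\sim u}a_v$, and here the crucial estimate is $\sum_{v\sim u}a_v\le\sum_{v\sim u}p_v\le\sum_{v\sim u}\frac{q}{\Delta_v}\le q$, which uses the hypothesis $p_v\le q/\Delta_v$ together with $\Delta_v\ge\delta_u$ for every neighbor $v$ of $u$ (the edge $uv$ is one of those over which $\Delta_v$ is a maximum). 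Hence the quadratic part is at most $q\sum_{u\in U}p_u$, and combining the two,
\[
E[P^{t+1}-P^t \mid F_c^t]\le-(1-q)\sum_{u\in U}p_u\le-\,\frac{p(1-q)}{\Delta},
\]
since $U\neq\emptyset$ and $p_u\ge p/\Delta$. This establishes the $\delta$-improving property, and Theorem~\ref{th:upper_any_game} then gives the stated bound.

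The main obstacle is precisely this quadratic term: when two adjacent unstable nodes flip in the same step, the edge between them retains its status while the naive single-move accounting would have ``spent'' it for both endpoints, so one must show that the total expected gain from such coincidences is strictly smaller than the expected single-move decrease. This is where the upper bound $p_u^t\le q/\Delta_u$ enters, through the neighborhood estimate $\sum_{v\sim u}p_v\le q$; the edge-flip characterization, the linear-term computation, and the appeal to Theorem~\ref{th:upper_any_game} are all routine once this is in place.
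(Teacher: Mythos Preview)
Your proposal is correct and follows essentially the same approach as the paper. For the corollary you make the natural parameter substitution $\hat p=\hat q=p\Delta$ in Theorem~\ref{th:upper_simple}, which is exactly what the paper intends when it calls the statement an ``immediate corollary''; your sketch of the theorem's proof---showing the dynamics are $\frac{p(1-q)}{\Delta}$-improving via the edge-flip identity, the linear ``each unstable node improves by at least one'' bound, and the neighborhood estimate $\sum_{v\sim u}p_v\le q$ coming from $p_v\le q/\Delta_v$ and $\Delta_v\ge\delta_u$---mirrors the paper's Lemma~\ref{lem:upper_simple}, with only a cosmetic difference in bookkeeping (you split into linear/quadratic parts, the paper categorizes edges as $S_1,S_2,C_1,C_2$).
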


Theorem~\ref{th:upper_simple} derives from the following lemma and Theorem~\ref{th:upper_any_game}.

\begin{lemma}
	\label{lem:upper_simple}
	Any dynamics satisfying the hypothesis of Theorem~\ref{th:upper_simple} are $\delta$-improving for $\delta=p(1 - q)/\Delta$. 
\end{lemma}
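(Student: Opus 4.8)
The plan is to fix an arbitrary configuration $c$ that is not an equilibrium, fix a time $t$, compute $E[P^{t+1}-P^t\mid F_c^t]$ exactly, and then bound it. The first point is that, conditioned on $F_c^t$, the move of every node is \emph{deterministic}: in the symmetric coordination game with two strategies the only better response available to an unstable node is to switch color, and a stable node never changes; hence the only remaining randomness is which nodes are active at step $t$. Let $U$ be the set of unstable nodes of $c$; since $c$ is not an equilibrium, $U\neq\emptyset$. For a node $u$ write $\phi_u=p_u^t$ if $u\in U$ and $\phi_u=0$ otherwise, so that $\phi_u$ is precisely the probability that $u$ flips its color at step $t$, and these flipping events are mutually independent over $u$ by definition of the dynamics.

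Next I would decompose the potential over edges. Identifying the colors with bits and letting $a_w\in\{0,1\}$ indicate whether $w$ flipped, edge $uv$ is conflicting at time $t+1$ iff $c_u\oplus a_u\neq c_v\oplus a_v$, i.e.\ iff $(c_u\oplus c_v)\oplus(a_u\oplus a_v)=1$. Thus the status of $uv$ changes exactly when precisely one of $u,v$ flips, and in that case its contribution to the potential changes by $s_{uv}$, where $s_{uv}=-1$ if $uv$ is conflicting in $c$ and $s_{uv}=+1$ otherwise. By independence the probability that exactly one of $u,v$ flips equals $\phi_u+\phi_v-2\phi_u\phi_v$, so
\[
E[P^{t+1}-P^t\mid F_c^t]=\sum_{uv\in E(G)} s_{uv}\bigl(\phi_u+\phi_v-2\phi_u\phi_v\bigr)
=\sum_{u}\phi_u\!\!\sum_{v:\,uv\in E(G)}\!\!s_{uv}\;-\;2\sum_{uv\in E(G)}s_{uv}\phi_u\phi_v .
\]
In the linear part, $\sum_{v:\,uv\in E(G)}s_{uv}=(\delta_u-dc_u^t)-dc_u^t=\delta_u-2dc_u^t$, and only $u\in U$ contributes; for such $u$ instability gives $2dc_u^t>\delta_u$, hence $\delta_u-2dc_u^t\le -1$ since these are integers. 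So the linear part is at most $-\sum_{u\in U}p_u^t$.

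It remains to control the quadratic correction, and here is the crux. For every $u\in U$ and every neighbor $v$ of $u$ one has $\Delta_v\ge\delta_u$ (because $uv\in E(G)$), hence $p_v^t\le q/\Delta_v\le q/\delta_u$, so $\sum_{v:\,uv\in E(G)}p_v^t\le q$; this is exactly where the hypothesis $p_v^t\le q/\Delta_v$ is used, to make the per-node neighborhood sum strictly below $1$. Bounding the correction in absolute value by $2\sum_{uv:\,u,v\in U}p_u^tp_v^t\le\sum_{u\in U}p_u^t\sum_{v:\,uv\in E(G)}p_v^t\le q\sum_{u\in U}p_u^t$ and combining with the linear estimate yields
\[
E[P^{t+1}-P^t\mid F_c^t]\;\le\;-(1-q)\sum_{u\in U}p_u^t\;\le\;-(1-q)\,|U|\,\frac{p}{\Delta}\;\le\;-\frac{p(1-q)}{\Delta}\;=\;-\delta ,
\]
using $p_u^t\ge p/\Delta$ and $|U|\ge1$, which is the $\delta$-improving property.

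I expect the main obstacle to be precisely this quadratic correction: one must not discard the signs $s_{uv}$ too crudely, since any bound on $\sum_{uv}\phi_u\phi_v$ that does not scale with $|U|$ is false; the right move is to notice that both the guaranteed decrease from the linear part and the possible increase from the correction scale with $\sum_{u\in U}p_u^t$, so it suffices that each unstable node's ``spoiling'' mass $\sum_{v:\,uv\in E(G)}p_v^t$ be at most $q<1$, which is exactly what the upper constraint $p_v^t\le q/\Delta_v$ buys. A minor point worth one sentence in the proof is why allowing an arbitrary better response rather than the best response is harmless: with only two strategies, an unstable node's unique better response is to flip, so the conditional analysis above is unaffected.
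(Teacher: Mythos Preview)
Your proof is correct and follows essentially the same route as the paper: both compute the expected change in the number of conflicting edges, isolate a linear part bounded by $-\sum_{u\in U}p_u^t$ via the instability condition $2dc_u^t>\delta_u$, and control the quadratic correction using $p_v^t\le q/\Delta_v\le q/\delta_u$ for neighbors $v$ of $u$. The only cosmetic difference is that the paper organizes the computation via the edge partition $S_1,S_2,C_1,C_2$ and keeps the sign on the quadratic term to retain only $C_2$, whereas you use the sign convention $s_{uv}$ with a linear/quadratic split and bound the quadratic part in absolute value; both lead to the same final inequality $E[P^{t+1}-P^t\mid F_c^t]\le -(1-q)\sum_{u\in U}p_u^t\le -p(1-q)/\Delta$.
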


\begin{proof} 
  Consider the event $F_c^t$ where a configuration $c$ is reached at time $t$.
  Let $C^t$ denote the
	number of conflicting edges in $c$, 
	and $U^t$ be the set of unstable nodes at time $t$ respectively. Recall that the number of conflicting edges is equal to the potential, that is, $P^t=C^t$.  
        We now express $E[C^{t+1}-C^{t} \mid F_c^t]$ as a function
        of the values $\set{p_u^t \mid u\in V(G)}$ associated to $c$.
	
	For that purpose, we first analyze the probability that any given edge
	of $c$ is conflicting after the random choices made at time $t$.
	We distinguish the following types of edges.
	Let $S_1$ (resp. $S_2$) denote the set of edges in $c$ with the same color 
	and one unstable extremity (resp. two). Similarly, let $C_1$ (resp. $C_2$)
	denote the set of edges in $c$ with conflicting colors and one unstable
	extremity (resp. two). Note that $C^t = |C_1| + |C_2|$.
	A conflicting edge $uv$ will become non-conflicting if only one extremity
	changes its color. Similarly, a non-conflicting edge $uv$
	will become conflicting if only one extremity changes its color. 
	Due to independence of choices, this happens in both cases
	with probability $p_{uv}^t=p_u^t(1-p_v^t) +
	(1-p_u^t)p_v^t$ if both $u$ and $v$ are unstable, and with probability $p_u^t$
	if $u$ is unstable and $v$ is not. 
	By linearity of expectation, we then obtain:
	\begin{equation}
	\label{eq:deltapot}
	E[C^{t+1}-C^t\mid F_c^t] = 
	\sum_{uv\in S_1}p_u^t
	+ \sum_{uv\in S_2}p_{uv}^t
	- \sum_{uv\in C_1}p_u^t
	- \sum_{uv\in C_2}p_{uv}^t \enspace .
	\end{equation}
	(When we note $uv\in C_1$ (resp. $uv\in S_1$), we assume that $u$
	is unstable and $v$ is not.)
	By definition, each unstable node $u$ sees  more conflicting edges than non-conflicting ones, thus implying
	\ifshort
		$
		1 + \sum_{v|uv\in S_1}1
		+ \sum_{v|uv\in S_2} 1
		\le
		\sum_{v|uv\in C_1}1
		+ \sum_{v|uv\in C_2} 1.
		$
		\else
	\fi
	By multiplying by $p_u^t$ and then summing over all unstable nodes, we  obtain:
	\begin{equation}
	\label{eq:active}
	\sum_{u\in U^t}p_u^t + \sum_{uv\in S_1}p_u^t
	+ \sum_{uv\in S_2}(p_u^t+p_v^t)      
	\le
	\sum_{uv\in C_1}p_u^t
	+ \sum_{uv\in C_2}(p_u^t+p_v^t) \enspace .
	\end{equation}
	As $p_{uv}^t=p_u^t+p_v^t-2p_u^tp_v^t$, we deduce from 
	\eqref{eq:deltapot} and \eqref{eq:active}:
	
	\begin{equation}
	\label{eq:deltapot2}
	E[C^{t+1}-C^t \mid F_c^t] \le
	\sum_{uv\in C_2} 2p_u^tp_v^t - \sum_{u\in U^t} p_u^t \enspace .
	\end{equation}
	Since every edge $uv\in C_2$ has both endpoints in $U^t$, we can rewrite \eqref{eq:deltapot2} as 
	
	\[
	E[C^{t+1}-C^t \mid F_c^t]
	\le \sum_{u\in U^t} p_u^t \Big(-1 + \sum_{v | uv \in C_2} p_v^t\Big) \enspace .
	\]
Using $p_v^t \le \frac{q}{\Delta_v} \le \frac{q}{\delta_u}$
and $p_u^t \ge \frac{p}{\Delta}$, we obtain the following inequality:
	$
	E[C^{t+1}-C^t \mid F_c^t] 
	\le \sum_{u\in U^t} \frac{p}{\Delta} (-1 + q) = - p(1-q) \frac{|U^t|}{\Delta}.
	$
	This completes the proof.
\qed\end{proof}

\ifshort
\paragraph{Adaptive Probabilities.}
\else
\subsubsection{Adaptive probabilities}
\fi
The upper bound of Theorem~\ref{th:upper_simple} can be improved if nodes are aware of the number of neighbors that are willing to change strategy (unstable) and then set accordingly the probability of changing too. More precisely, one can think of active nodes announcing to their neighbors that they are unstable and that they would like to switch to the other color, before actually doing so. Then, each unstable node will switch with a probability inversely proportional to the number of unstable neighbors. The following theorem shows that this yields an improved upper bound on the convergence time. 

\begin{theorem}
	\label{th:adaptive}
	Fix some real values $p,q\in \left(0,\frac{1}{2}\right)$. If  we have 
	$p_u^t\in [\frac{p}{d_u^t+1}, \frac{q}{d_u^t+1}]$ for all $u,t$ in a symmetric coordination game, 
	where $d_u^t$ is the number of conflicting unstable neighbors of $u$, then the
	expected convergence time is $O\left(\frac{m_0}{p(1-2q)}\right)$ where $m_0$ is the initial number of conflicting edges.
\end{theorem}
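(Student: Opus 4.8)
The plan is to isolate, exactly as in the previous section, a lemma stating that dynamics satisfying the hypothesis of Theorem~\ref{th:adaptive} are $\delta$-improving for $\delta = p(1-2q)$, and then conclude with Theorem~\ref{th:upper_any_game} together with the trivial bound $M_0 \le m_0$ (the potential of a symmetric coordination configuration equals its number of conflicting edges). So the whole work is in establishing the $\delta$-improving bound $E[C^{t+1}-C^t \mid F_c^t] \le -p(1-2q)$ for every non-equilibrium configuration $c$.

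For that, I would reuse the first part of the proof of Lemma~\ref{lem:upper_simple} up to inequality \eqref{eq:deltapot2}, i.e.
\[
E[C^{t+1}-C^t \mid F_c^t] \le \sum_{uv\in C_2} 2 p_u^t p_v^t - \sum_{u\in U^t} p_u^t ,
\]
since its derivation from \eqref{eq:deltapot} and \eqref{eq:active} only uses the combinatorial structure of symmetric coordination (each unstable node has strictly more conflicting than non-conflicting incident edges), not the particular values $p_u^t$. What must change is the control of the first sum: bounding $\sum_{v\,:\,uv\in C_2} p_v^t$ node by node, as in Lemma~\ref{lem:upper_simple}, now fails, because an unstable node $u$ may have arbitrarily many unstable conflicting neighbours $v$, each with $d_v^t=1$ and hence $p_v^t$ as large as $q/2$.

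The key new step is an orientation (charging) argument on the graph $H$ with vertex set $U^t$ and edge set $C_2$, so that $\deg_H(u)=d_u^t$. I would orient every edge $uv\in C_2$ towards the endpoint with the smaller value of $d^t$ (breaking ties by node identifier), and charge the quantity $2p_u^t p_v^t$ to that endpoint. A node $w$ then receives charge only from edges $wu$ with $d_u^t \ge d_w^t$; for each such $u$ this gives $p_u^t \le \frac{q}{d_u^t+1} \le \frac{q}{d_w^t+1}$, and there are at most $d_w^t$ such edges, so the total charge at $w$ is at most $2p_w^t \cdot d_w^t \cdot \frac{q}{d_w^t+1} \le 2q\,p_w^t$. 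Writing $W$ for the set of unstable nodes incident to some $C_2$-edge and summing the charges, $\sum_{uv\in C_2} 2p_u^t p_v^t \le 2q \sum_{w\in W} p_w^t$, and since $W \subseteq U^t$ this yields
\[
E[C^{t+1}-C^t \mid F_c^t] \le -(1-2q)\sum_{w\in W} p_w^t .
\]
It remains to lower bound $\sum_{w\in W} p_w^t$ by a constant. If $W\neq\emptyset$, then $p_w^t \ge \frac{p}{d_w^t+1}$ and, since every vertex of $H$ restricted to $W$ has degree at most $|W|-1$, one has $\sum_{w\in W}\frac{1}{d_w^t+1} \ge |W|\cdot\frac{1}{|W|} = 1$, so $\sum_{w\in W} p_w^t \ge p$. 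If $W=\emptyset$ then every unstable node $u$ has $d_u^t=0$ and hence $p_u^t\ge p$, and there is at least one unstable node since $c$ is not an equilibrium, so $\sum_{u\in U^t}p_u^t\ge p$ and already $E[C^{t+1}-C^t\mid F_c^t]\le -\sum_{u\in U^t}p_u^t \le -p$. In either case $E[C^{t+1}-C^t \mid F_c^t] \le -p(1-2q)$, as desired.

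The one genuinely new idea — and the step I expect to be the crux — is the edge orientation toward the lower-$H$-degree endpoint: that is precisely what converts the uncontrolled per-node sums into the clean per-node bound $2q\,p_w^t$. Everything else is routine: re-deriving \eqref{eq:deltapot2}, the elementary constant lower bound on $\sum_{w\in W}p_w^t$, and the final appeal to Theorem~\ref{th:upper_any_game}. I would also point out explicitly that the hypothesis $q<\frac12$ is used exactly to make the factor $1-2q$ positive (and $p\le q$ for the intervals $[\frac{p}{d_u^t+1},\frac{q}{d_u^t+1}]$ to be nonempty).
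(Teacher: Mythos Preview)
Your proposal is correct and follows essentially the same route as the paper: reuse \eqref{eq:deltapot2}, charge each $C_2$-edge to one endpoint, bound the resulting per-node sum by $2q\,p_w^t$, and finish with $\sum \frac{1}{d^t+1}\ge 1$ and Theorem~\ref{th:upper_any_game}. The only cosmetic difference is the orientation rule: the paper charges $2p_u^tp_v^t$ to the endpoint with the \emph{larger probability} (so the other factor is bounded by $p_u^t$ directly, giving $\sum_{v}2p_v^t\le 2p_u^t d_u^t\le 2q$), which lets it sum over all of $U^t$ without introducing the set $W$ or the separate $W=\emptyset$ case.
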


To prove this theorem we adapt  the proof of Lemma~\ref{lem:upper_simple} and show that these dynamics are $\delta$-improving for $\delta=p(1-2q)$ 
\ifshort
(see \cite{full} for details).
\else
(see Appendix~\ref{app:proof:th:adaptive} for details).
\fi

\ifshort
\medskip
\paragraph{Fully Local Dynamics.}
\else
\subsubsection{Fully local dynamics}
\fi 
Theorem~\ref{th:upper_simple} requires that each node is aware of a bound on
the maximum degree, or the local maximum degree in her neighborhood for setting
$p_u^t$. Theorem~\ref{th:adaptive} requires knowledge of the number of
conflicting unstable neighbors at each time step. We next consider dynamics that are fully local as each node $u$ can set the probabilities  $p_u^t$ by only looking at its own degree. 

\begin{theorem}
	\label{th:degree}
	Fix some real values $p,q\in \left(0,\frac{1}{2}\right)$. If  we have 
	$p_u^t\in [\frac{p}{\delta_u}, \frac{q}{\delta_u}]$ for all $u,t$ in a symmetric coordination game, 
	where $\delta_u$ is the degree of $u$, then the
	expected convergence time is $O\left(\frac{\Delta m_0}{p(1-2q)}\right)$ where $m_0$ is the initial number of conflicting edges.
\end{theorem}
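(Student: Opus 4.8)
The plan is to mimic the proof of Lemma~\ref{lem:upper_simple} almost verbatim, using Theorem~\ref{th:upper_any_game}, by establishing that these dynamics are $\delta$-improving for $\delta = p(1-2q)/\Delta$. Starting from Equation~\eqref{eq:deltapot2} in the proof of Lemma~\ref{lem:upper_simple},
\[
E[C^{t+1}-C^t \mid F_c^t] \le \sum_{uv\in C_2} 2p_u^tp_v^t - \sum_{u\in U^t} p_u^t \enspace ,
\]
I would rewrite the double-counting sum over $C_2$ by charging the weight $2p_u^tp_v^t$ of each edge to the endpoint with the larger probability, exactly as in the proof of Theorem~\ref{th:adaptive}. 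This gives
\[
E[C^{t+1}-C^t \mid F_c^t] \le \sum_{u\in U^t} p_u^t\paren{-1 + 2p_u^t d_u^t} \enspace ,
\]
where $d_u^t = \card{\set{v \mid uv \in C_2}}$ is the number of conflicting unstable neighbors of $u$, hence $d_u^t \le \delta_u$.

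Next I would plug in the degree-based bounds $p_u^t \in [\frac{p}{\delta_u}, \frac{q}{\delta_u}]$. Using $p_u^t \le \frac{q}{\delta_u}$ and $d_u^t \le \delta_u$ in the parenthesized factor yields $-1 + 2p_u^t d_u^t \le -1 + 2q < 0$ (since $q < 1/2$), and then using $p_u^t \ge \frac{p}{\delta_u} \ge \frac{p}{\Delta}$ gives
\[
E[C^{t+1}-C^t \mid F_c^t] \le -p(1-2q)\sum_{u\in U^t} \frac{1}{\delta_u} \le -\frac{p(1-2q)}{\Delta}\,\card{U^t} \le -\frac{p(1-2q)}{\Delta} \enspace ,
\]
where the last inequality holds because $\card{U^t} \ge 1$ whenever $c$ is not an equilibrium. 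Thus the dynamics are $\delta$-improving with $\delta = p(1-2q)/\Delta$, and Theorem~\ref{th:upper_any_game} together with the fact that the initial potential equals $m_0$ yields the claimed expected convergence time $O\!\left(\frac{\Delta m_0}{p(1-2q)}\right)$.

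There is essentially no obstacle here: the argument is a direct specialization of the charging trick from Theorem~\ref{th:adaptive} combined with the global degree bound $\delta_u \le \Delta$ from Lemma~\ref{lem:upper_simple}, so the only thing to be careful about is that the sign is controlled by $q < 1/2$ (not $q < 1$ as in Lemma~\ref{lem:upper_simple}), which is why the hypothesis restricts $p,q$ to $(0,\tfrac12)$, and that one loses a factor of $\Delta$ relative to Theorem~\ref{th:adaptive} because here $p_u^t$ is only lower-bounded by $p/\Delta$ rather than by something proportional to $1/(d_u^t+1)$. The mild point worth a sentence of justification is why we may replace $d_u^t \le \delta_u$ freely inside both the factor $2p_u^t d_u^t$ and the final sum; this is immediate since $C_2$-neighbors of $u$ are in particular neighbors of $u$.
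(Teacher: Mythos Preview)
Your proposal is correct and follows exactly the approach the paper takes: the paper's proof simply says ``Similarly to the proof of Theorem~\ref{th:adaptive}'' and arrives at the same chain $E[C^{t+1}-C^t \mid F_c^t] \le -p(1-2q)\sum_{u\in U^t} \frac{1}{\delta_u} \le -p(1-2q)\frac{|U^t|}{\Delta}$, then concludes via Theorem~\ref{th:upper_any_game} as in Theorem~\ref{th:upper_simple}. Your write-up merely spells out the details the paper leaves implicit.
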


\ifshort
The proof of this theorem is similar to that of Theorems~\ref{th:upper_simple} and \ref{th:adaptive}  (see \cite{full}).
\else
The proof of this theorem is similar to that of Theorems~\ref{th:upper_simple} and \ref{th:adaptive}  (see Appendix~\ref{app:proof:th:degree}).
\fi

\ifshort
\smallskip
\paragraph{Tightness of the Results.}
Consider the following network  composed of  a clique and $r/2+1$ paths, for even $r$ (see figure below). Each node in a path is connected to all nodes to the right and to the left path (or clique for the first path) as feature by demi-edges with degree indications w.r.t. the previous and the next part of the construction. Below each part, we indicate the number of nodes in the part. 
\begin{wrapfigure}{r}{0.5\textwidth}
	\centering
	\vspace{-.5cm}
	\includegraphics[scale=.35]{adaptive-better_third_larger_fonts}
	\vspace{-.5cm}
\end{wrapfigure}
Intuitively, the construction is such that the process proceeds from left to right, where nodes in certain path become unstable only after all nodes in the previous path became black; moreover, inside each path the process is also sequential, i.e.,  the path becomes black from extremities to center.
\else
\subsubsection{Tightness of the results}
We can use the example in Figure~\ref{fig:adpative_dynamics_better_second} to show that the analysis of Theorems~\ref{th:upper_simple}, \ref{th:adaptive}, and \ref{th:degree} is tight, and adaptive dynamics are provably faster than non-adaptive ones. 
	\begin{figure}[tb]
		\centering
		\includegraphics[scale=.5]{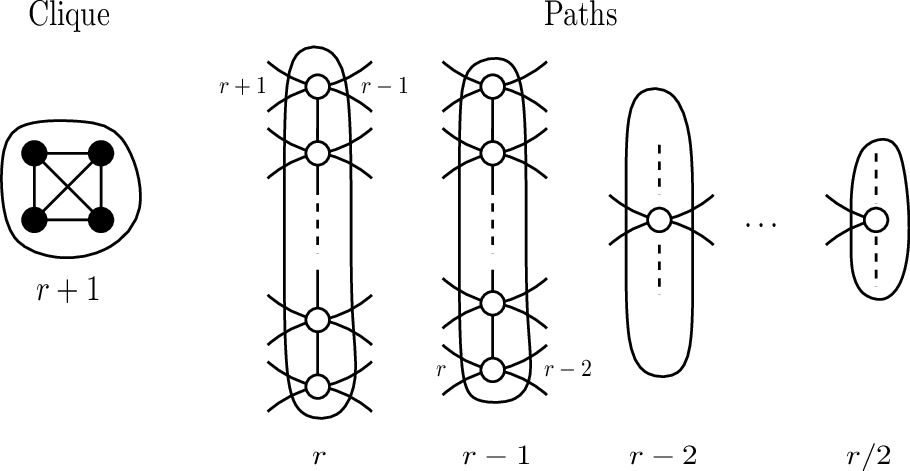}
		\caption{Example of network with faster adaptive dynamics than non-adaptive ones. The network is composed of $r/2+2$ parts for even $r$: a clique and $r/2+1$ paths. Below each part, we indicate the number of nodes in the part. Each node in a path is connected to all nodes to the right and to the left path (or clique for the first path) as feature by demi-edges with degree indications w.r.t. the previous and the next part of the construction.}
		\label{fig:adpative_dynamics_better_second}
	\end{figure}
	Intuitively, the construction is such that the following holds:
		\begin{itemize}
			\item Initially only the extreme nodes of the leftmost path are unstable. After an unstable node become black,  the next neighbor in the path (and only that one) becomes unstable.
			\item The process proceeds from left to right, and nodes in certain path become unstable only after all nodes in the previous path became black (except for the last path with $r/2$ nodes). 
			\item Inside the path containing the only unstable nodes, the process is sequential: the path becomes black from extremities to center. At most two nodes are unstable: the extremities of the sub-path of white nodes. 
	 meaning that after one node changes color, its neighbor (and only that) becomes unstable. 
		\end{itemize}
\fi 
These observations imply that any dynamics in which nodes become active with probability $p\simeq \alpha$, require $\Omega(r^2/\alpha)=\Omega(n/\alpha)$ steps. 
	
	Since every node has degree $\Theta(r)=\Theta(\sqrt{n})=\Theta(\Delta)$, and the initial configuration has $m_0=\Theta(r^2)=\Theta(n)$ conflicting edges (those between the clique and the first path),  non-adaptive dynamics take 
	$\Theta(\Delta m_0) = \Theta(n^{3/2})$ time steps. On the contrary, adaptive dynamics take $\Theta(m_0)=\Theta(n)$ steps since the number $d_u^t$ of unstable conflicting neighbors of each node $u$ is at most $1$. Therefore, the analysis of Theorems~\ref{th:upper_simple}, \ref{th:adaptive}, and \ref{th:degree} is tight. Moreover, the adaptive dynamics are provably faster than non-adaptive ones.

\section{An Exponential Lower Bound when the Degree is Unbounded} 

In this section we prove a lower bound for the case of symmetric coordination game on each edge and dynamics with constant probabilities, that is, the case in which every node becomes active with some probability $p$ which does not depend on the graph nor on the time, and which is the same over all nodes.  

\begin{theorem}\label{th:lower-bound}
	For every $p>0$, there are starting configurations of the complete bipartite graph where the expected number of steps to converge to an equilibrium  is exponential in the number of nodes. 
\end{theorem}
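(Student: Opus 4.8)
The plan is to use the balanced complete bipartite graph $K_{n,n}$ with the symmetric coordination game on every edge, and to exhibit a starting configuration from which the dynamics keep oscillating, never reaching an equilibrium, for $e^{\Omega(n)}$ steps with high probability. The first observation is that the process collapses to a two‑dimensional Markov chain. Write $L,R$ for the two sides (each of size $n$) and let $\ell_t$ (resp. $r_t$) be the number of black nodes in $L$ (resp. $R$) at time $t$. Every node of $L$ has neighborhood exactly $R$, so a black node of $L$ is unstable iff $n-r_t>n/2$ (i.e. $r_t<n/2$) and a white node of $L$ is unstable iff $r_t>n/2$; symmetrically for $R$ in terms of $\ell_t$. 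Since each unstable active node flips independently with probability $p$ and all nodes of a side are interchangeable, the pair $(\ell_t,r_t)$ is a Markov chain: if $r_t<n/2$ then $\ell_{t+1}=\ell_t-\mathrm{Bin}(\ell_t,p)$, if $r_t>n/2$ then $\ell_{t+1}=\ell_t+\mathrm{Bin}(n-\ell_t,p)$, and $\ell_{t+1}=\ell_t$ if $r_t=n/2$, with the analogous rule for $r_{t+1}$ as a function of $\ell_t$ and the two Bernoulli families independent. Moreover the configuration is an equilibrium precisely when $(\ell_t,r_t)\in\{0,n\}^2\cup\{(n/2,n/2)\}$ (the four ``monochromatic per side'' states and the balanced one).

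Next I would pin down the attracting period‑two orbit of the mean dynamics. In the ``opposite sides'' regime $\ell_t>n/2>r_t$ (or $\ell_t<n/2<r_t$) the two coordinates move in opposite directions, and over two steps each of $\ell_t,r_t$ follows the affine contraction $x\mapsto(1-p)^2x+pn$, whose fixed point is $b:=\tfrac{n}{2-p}\in(n/2,n)$ with companion value $a:=(1-p)b=\tfrac{n(1-p)}{2-p}\in(0,n/2)$. One checks the identities $(1-p)b=a$ and $(1-p)a+pn=b$, so the states with $(\ell,r)\approx(b,a)$ and $(\ell,r)\approx(a,b)$ form a $2$‑cycle sitting a constant fraction of $n$ away from every equilibrium. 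Fix a small constant $\eta=\eta(p)>0$ and set $G_0:=[b-\eta n,b+\eta n]\times[a-\eta n,a+\eta n]$, $G_1:=[a-\eta n,a+\eta n]\times[b-\eta n,b+\eta n]$, and $G:=G_0\cup G_1$; the initial configuration will be the one with $\lceil b\rceil$ black nodes in $L$ and $\lfloor a\rfloor$ black nodes in $R$, which lies in $G_0$.

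I would then establish two facts. (i) For $\eta$ small enough (depending only on $p$), $G$ contains no equilibrium count, because $0<a<n/2<b<n$ with all four gaps of order $n$. (ii) There is a constant $c=c(p)>0$ such that from any state in $G_0$ the chain lands in $G_1$ with probability at least $1-e^{-cn}$, and symmetrically from $G_1$ to $G_0$: if $(\ell_t,r_t)\in G_0$ then (for $\eta$ small) $r_t<n/2$ and $\ell_t>n/2$, so $\mathbb{E}[\ell_{t+1}]=(1-p)\ell_t$ lies within $(1-p)\eta n\le\eta n$ of $(1-p)b=a$, and $\mathbb{E}[r_{t+1}]=(1-p)r_t+pn$ lies within $\eta n$ of $(1-p)a+pn=b$; since $\ell_{t+1}$ and $r_{t+1}$ are each sums of $\Omega(n)$ independent Bernoullis, a Chernoff bound gives $|\ell_{t+1}-\mathbb{E}\ell_{t+1}|<p\eta n$ and likewise for $r_{t+1}$ except with probability $e^{-\Omega(n)}$ (with the hidden constant depending on $p$ through $\eta$), so $(\ell_{t+1},r_{t+1})\in G_1$ with high probability.

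Finally I would assemble the pieces: starting from $(\ell_0,r_0)\in G_0$, by (ii) and a union bound the chain stays in $G$ for all $t\le T$ with probability at least $1-Te^{-cn}$, and choosing $T=e^{cn/2}$ this is $1-e^{-\Omega(n)}$; on that event, by (i) the process is never at an equilibrium during the first $T$ steps, so the convergence time $\tau$ obeys $\Pr[\tau>e^{cn/2}]\ge1-e^{-\Omega(n)}$, whence $\mathbb{E}[\tau]=e^{\Omega(n)}$, exponential in the number $2n$ of nodes. (The endpoint $p=1$ is degenerate — the dynamics are then deterministic and every configuration reaches a monochromatic‑per‑side state within two steps — so the statement is to be read for $p\in(0,1)$, or for $p$ bounded away from $1$; see Corollary~\ref{cor:lower-bound}.) The step I expect to be the main obstacle is (ii): one must choose the oscillation anchors $a,b$ exactly so that the two‑step mean map is a strict contraction toward the $2$‑cycle rather than drifting toward $0$, $n$, or $n/2$, and then control the $\Theta(\sqrt n)$‑scale fluctuations by concentration so that the good region is invariant with only an exponentially small failure probability per step; the Markov reduction, the list of equilibria, and the final union‑bound/expectation computation are routine once this invariance is in hand.
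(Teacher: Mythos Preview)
Your proposal is correct and follows essentially the same approach as the paper: both arguments work on $K_{n,n}$, identify the same period-two orbit of the mean dynamics anchored at the fraction $\alpha=1/(2-p)$ (your $b=n\alpha$, $a=n(1-\alpha)$), define a constant-width window around that orbit, use Chernoff bounds to show the chain stays in the window with failure probability $e^{-\Omega(n)}$ per step, and finish with a union bound/geometric argument to get $\mathbb{E}[\tau]=e^{\Omega(n)}$. The paper packages the invariance step as a single event $CYCLE(t)$ (Lemma~\ref{le:cycle}) with the explicit choice $\epsilon=p/3$; your two-region formulation $G_0\leftrightarrow G_1$ and the two-step contraction $x\mapsto(1-p)^2x+pn$ make the same computation.

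One small slip: the equilibria of the count chain are $(0,0)$, $(n,n)$, and (for even $n$) $(n/2,n/2)$; the states $(0,n)$ and $(n,0)$ are \emph{not} equilibria (every node is unstable there). This is harmless for your argument, since you only need that $G$ avoids the actual equilibria, and your gap estimate ``$0<a<n/2<b<n$ with all four gaps of order $n$'' already covers that. Relatedly, for $p=1$ the dynamics do reach monochromatic-per-side states, but those states then $2$-cycle forever and are not equilibria, so the theorem holds trivially there as well.
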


\paragraph{Proof Idea.}
Consider the \emph{continuous} version of the problem in which, instead of a bipartite graph with $n$ nodes on each side, we imagine $L$ and $R$ being two continuous intervals (see figure below). Start from a symmetric configuration in which a fraction $\alpha>1/2$ of the players in $L$ has color $W$ and the same fraction in $R$ has the other color $B$. 
Suppose that 
\ifshort
$
\alpha = \frac{1}{2-p}.
$
\else
\[
\alpha = \frac{1}{2-p}.
\]
\fi 
Then after one step the system reaches the symmetric configuration, that is, a fraction $\alpha$ of nodes in $L$ has color $B$ and the same fraction in $R$ has color $W$. Indeed, the fraction $\beta$ of players with color $B$ in $L$ after one step is precisely
\ifshort
$
\beta = 1 - \alpha + p \cdot\alpha = \frac{1-p}{2-p} + \frac{p}{2-p} =\alpha.
$
\else 
\[
\beta = 1 - \alpha + p \cdot\alpha = \frac{1-p}{2-p} + \frac{p}{2-p} =\alpha \ .
\]
\fi 

\ifshort
\begin{wrapfigure}{r}{0.5\textwidth}
	\centering
	\includegraphics[scale=.35]{bipartite-idea_cameraready}
	\vspace{-.95cm}
\end{wrapfigure}
\else 
\begin{figure}
	\centering
	\includegraphics[scale=.5]{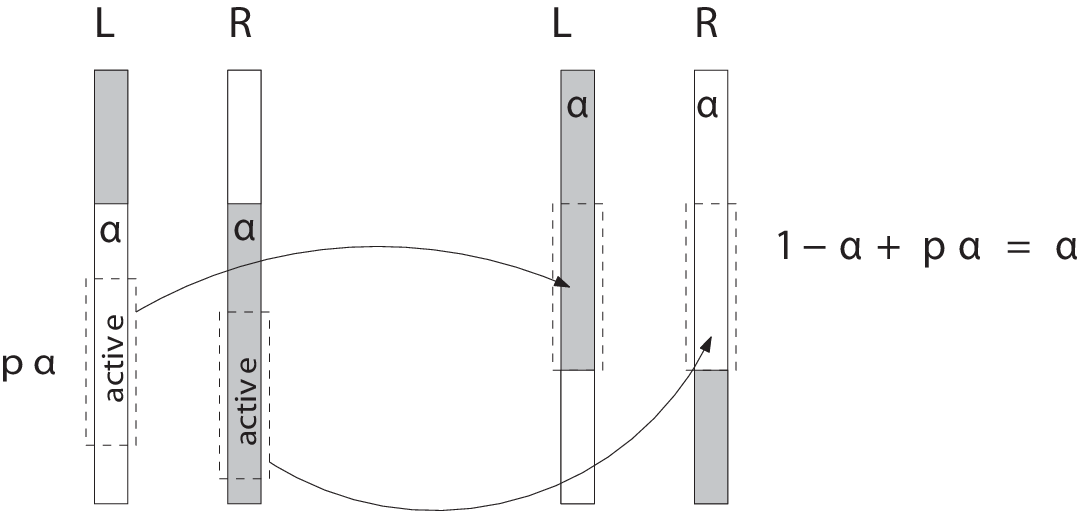}
	\caption{The idea of the proof of Theorem~\ref{th:lower-bound} (active nodes change color leading to a symmetric configuration for suitable $\alpha$). All the nodes on the left ($L$) are adjacent to all nodes on the right ($R$).}
	\label{fig:bipartite-idea}
\end{figure}
\fi 
We next prove the theorem via Chernoff bounds. For $\epsilon =p/3$ consider the interval
$
around(\alpha) := [(1-\epsilon)\alpha, (1+\epsilon)\alpha],
$
and let 
$CYCLE(t)$ be the following event:
\begin{quote}
	$CYCLE(t):=\{$At time $t$ a fraction $\alpha_L\in around(\alpha)$ of the nodes in $L$ has some color $c$, and a fraction $\alpha_R\in around(\alpha)$ of the nodes in $R$ has the other color $\overline{c}$ (where $\overline{B}=W$
and $\overline{W}=B$).$\}$
\end{quote}
We say that the configuration is \emph{balanced} at time $t$ when $CYCLE(t)$ holds.
Since $\epsilon<p/2$ we have  $(1-\epsilon)\alpha>1/2$, and thus the best response of every (active) node in a balanced configuration is to switch color (since both $\alpha_Ln$ and $\alpha_R n$ are strictly larger than $n/2$). 
\ifshort
Chernoff bounds guarantee that with high probability enough many nodes will be activated and therefore will switch to obtain a symmetric balanced configuration  (see \cite{full} for  proof of next lemma):
\else
Chernoff bounds guarantee that with high probability enough many nodes will be activated and therefore will switch to obtain a symmetric balanced configuration.  Formally, the following lemma holds (see Appendix~\ref{app:proof:le:cycle} for the proof).
\fi

\begin{lemma}\label{le:cycle}
	For any $t$, it holds that
	$
	P[CYCLE(t+1)| \ CYCLE(t)] \geq  1 - 4\exp\left(-\frac{\delta^2}{3}\mu\right),
	$
	where $\delta=\frac{\epsilon}{1+\epsilon}$  
        and $\mu = p(1+\epsilon)\alpha n$ with $\epsilon=p/3$.
\end{lemma}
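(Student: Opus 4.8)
The plan is to condition on $CYCLE(t)$ and track what happens to each side after the activation step, using the fact (established just before the lemma) that in a balanced configuration every active node strictly prefers to switch its color. Fix the colors witnessing $CYCLE(t)$: say a fraction $\alpha_L \in around(\alpha)$ of $L$ has color $c$ and a fraction $\alpha_R \in around(\alpha)$ of $R$ has color $\overline c$. I want to show that after one step, with high probability a fraction in $around(\alpha)$ of $L$ has color $\overline c$ and a fraction in $around(\alpha)$ of $R$ has color $c$, which is exactly $CYCLE(t+1)$ (with the roles of the two colors swapped).

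First I would analyze side $L$ alone. The nodes in $L$ that currently have color $c$ — there are $\alpha_L n$ of them — would each, if active, switch to $\overline c$; the nodes currently colored $\overline c$ would stay (switching is strictly not a best response for them, since the majority of their neighbors, being in $around(\alpha)$, holds $\overline c$). Wait — I need to be careful about direction: the $\alpha_L n$ nodes in $L$ with color $c$ are in the minority-agreement situation and want to switch to $\overline c$; the $(1-\alpha_L)n$ nodes with color $\overline c$ want to stay. So the number of $L$-nodes that end up with color $\overline c$ is $(1-\alpha_L)n$ plus a $\mathrm{Binomial}(\alpha_L n, p)$ count of the switchers. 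Its expectation is $(1-\alpha_L)n + p\alpha_L n$. Plugging $\alpha_L \le (1+\epsilon)\alpha$ and using $1-\alpha+p\alpha = \alpha$ (the fixed-point identity from the proof idea), the expected fraction with color $\overline c$ is at most... hmm, I should check it lands in $around(\alpha)$; the monotonicity $\beta(\alpha_L) = 1-(1-p)\alpha_L$ is decreasing in $\alpha_L$, so $\alpha_L\in[(1-\epsilon)\alpha,(1+\epsilon)\alpha]$ gives $\beta(\alpha_L)$ in a corresponding interval around $\alpha$, and one checks this sub-interval is contained in $around(\alpha)$ after accounting for the deviation bound. Then I apply a Chernoff bound to the $\mathrm{Binomial}(\alpha_L n, p)$ term: the number of switchers concentrates around $\mu' = p\alpha_L n$, and with $\mu = p(1+\epsilon)\alpha n \ge \mu'$ and $\delta = \epsilon/(1+\epsilon)$, a multiplicative Chernoff bound gives that the switcher count deviates from its mean by more than a $\delta$-fraction with probability at most $2\exp(-\delta^2\mu/3)$ (two-sided). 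The same argument applies verbatim to side $R$, contributing another $2\exp(-\delta^2\mu/3)$.

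The main bookkeeping step — and the place I expect to spend the most care — is verifying that the deviation allowed by $\delta = \epsilon/(1+\epsilon)$ is exactly calibrated so that "$\beta(\alpha_L)$ perturbed by a $\delta$-fraction of $p\alpha_L n$" stays inside $[(1-\epsilon)\alpha,(1+\epsilon)\alpha]$, for every $\alpha_L$ in the conditioning interval; this is where the specific choices $\epsilon = p/3$ and the form of $\delta$ and $\mu$ get used, and it is purely a chain of inequalities with no conceptual difficulty, just arithmetic that must close. Finally, a union bound over the two sides $L$ and $R$ and over the two possible failure directions on each side gives the total failure probability $4\exp(-\delta^2\mu/3)$, hence $P[CYCLE(t+1)\mid CYCLE(t)] \ge 1 - 4\exp(-\delta^2\mu/3)$, as claimed. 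One subtlety to flag: since the event $CYCLE(t)$ only pins down the fractions (not which specific nodes), I should note that the bound holds uniformly over all configurations consistent with $CYCLE(t)$, so conditioning on the full event is fine.
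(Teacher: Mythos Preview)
Your plan is essentially the paper's own proof: condition on $CYCLE(t)$, note that only the majority-color nodes on each side are unstable, write the next-step fraction on side $L$ as $1-\alpha_L + Y/n$ with $Y\sim\mathrm{Bin}(\alpha_L n,p)$, apply two-sided Chernoff with parameter $\delta=\epsilon/(1+\epsilon)$, check via the fixed-point identity $1-\alpha(1-p)=\alpha$ that the resulting fraction lands back in $around(\alpha)$, and union-bound over the two tails and the two sides to get the factor $4$. The arithmetic verification you flag as the delicate part is exactly the pair of implication chains the paper spells out, so there is nothing to add.
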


\ifshort
The above lemma implies that, starting from a balanced configuration, the probability of reaching an equilibrium in $t$ steps is at least $(1-q)^{t-1}$ where $q=4\exp(-\frac{\delta^2}{3}\mu)$. The expected time to converge is thus at least $1/q^2$ which proves the theorem. 
\else
\begin{proof}[of Theorem~\ref{th:lower-bound}]
	Consider any starting configuration which is balanced, that is, $CYCLE(0)$ holds.
	By Lemma~\ref{le:cycle}, the probability that the event
	$CYCLE(1)\wedge\cdots\wedge CYCLE(t)$ holds is at least $(1-q)^t$
	where $q=4\exp\left(-\frac{\delta^2}{3}\mu\right)$. 
Since $CYCLE(t)$ implies that at least one node is unstable, the probability that we reach an equilibrium in $t$ steps is thus at least $(1-q)^{t-1}$. The expected time for convergence is thus at least $\sum_{t\ge 0}t(1-q)^{t-1}=1/q^2$.
\qed\end{proof}	
\fi

Simple calculations lead to the following result
\ifshort
(see \cite{full} for details):
\else
(see Appendix~\ref{app:proof::lower-bound} for details).
\fi

\begin{corollary}
  \label{cor:lower-bound}
	Starting from any balanced configuration, the expected number of steps to converge to an equilibrium in the complete bipartite graph is $e^{\Omega(n^{1-3c})}$, as long as $p\geq 1/n^c$ with $0 \leq c<1/3$.
\end{corollary}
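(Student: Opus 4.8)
The plan is to instantiate the bound obtained inside the proof of Theorem~\ref{th:lower-bound} with the choice $\eps=p/3$ already made there, and simply track how the resulting exponent depends on $p$ and $n$. Recall that this proof establishes that, starting from any balanced configuration, the expected convergence time is at least $1/q^2$, where $q=4\exp\!\paren{-\tfrac{\delta^2}{3}\m}$ with $\delta=\tfrac{\eps}{1+\eps}$, $\m=p(1+\eps)\alpha n$ and $\alpha=\tfrac{1}{2-p}$. Hence the expected convergence time equals
\[
\frac{1}{q^2}=\frac{1}{16}\exp\!\paren{\frac{2\delta^2}{3}\m},
\]
so it suffices to lower bound $\delta^2\m$ in terms of $p$ and $n$.

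First I would record the elementary estimates valid for $p\in(0,1]$: since $\eps=p/3\le 1/3$ we have $\delta=\tfrac{p/3}{1+p/3}\ge \tfrac{p}{4}$, and since $\alpha=\tfrac{1}{2-p}\ge\tfrac12$ and $1+\eps\ge 1$ we have $\m\ge \tfrac{pn}{2}$. Combining these,
\[
\delta^2\m\ \ge\ \paren{\frac{p}{4}}^{2}\cdot\frac{pn}{2}\ =\ \frac{p^3n}{32},
\]
and therefore the expected convergence time is at least $\tfrac{1}{16}\exp\!\paren{\tfrac{p^3n}{48}}=e^{\Omega(p^3n)}$. Plugging in the hypothesis $p\ge n^{-c}$ gives $p^3n\ge n^{1-3c}$; since $c<1/3$ the quantity $n^{1-3c}$ tends to infinity, so the constant factor $1/16$ is absorbed into the $\Omega$, and the expected convergence time is $e^{\Omega(n^{1-3c})}$, as claimed. (The restriction $c<1/3$ is exactly what makes this super-polynomial; for $c=1/3$ the exponent would be a constant.)

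The only point that needs a word of justification is that a balanced starting configuration actually exists on the complete bipartite graph with $n$ nodes per side: take $\lceil\alpha n\rceil$ nodes of $L$ with color $W$ (the rest with color $B$), and symmetrically $\lceil\alpha n\rceil$ nodes of $R$ with color $B$. The corresponding fractions lie in $around(\alpha)=[(1-\eps)\alpha,(1+\eps)\alpha]$ as soon as this interval, whose width is $2\eps\alpha=\Theta(p)$, contains a multiple of $1/n$, i.e. as soon as $\eps\alpha n=\Theta(pn)\ge\Theta(n^{1-c})$ is at least $1$; this holds for all large $n$ because $c<1$. Since the corollary is an asymptotic statement, restricting to such $n$ is harmless. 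I do not anticipate any genuine obstacle: the argument is a direct asymptotic reading of Lemma~\ref{le:cycle} and the proof of Theorem~\ref{th:lower-bound}, the only mild care being in keeping the constants and the admissible range of $p$ under control.
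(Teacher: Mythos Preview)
Your proof is correct and follows essentially the same approach as the paper: both lower bound $\delta^2\mu$ by a constant times $p^3 n$ and then invoke $p\ge n^{-c}$. The only cosmetic difference is that the paper computes $\delta^2\mu=\frac{p^3 n}{3(3+p)(2-p)}$ exactly and then bounds the denominator by $24$, whereas you bound $\delta$ and $\mu$ separately to obtain $p^3 n/32$; your extra paragraph on the existence of a balanced initial configuration is a reasonable addition that the paper leaves implicit.
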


\section{General Local Interaction  Potential Games} 

In this section we extend the upper bound of Theorem~\ref{th:upper_simple} to general local interaction potential games:  each edge $uv$ of $G$ is associated with a (two-player) potential
game with potential $P_{uv}$. Without loss of generality, we assume that the potential $P_{uv}$ takes integer  non-negative values.  
The upper bound is given in terms of the following quantity: 
\begin{equation}
\Delta_P := \max_u \sum_{v\in N(u)} \Delta_{P_{uv}} \enspace , \label{eq:max-potential}
\end{equation}
where $\Delta_{P_{uv}}$ denotes the maximum value of
$P_{uv}$. Note that for symmetric coordination games, $\Delta_P$ is simply the maximum degree $\Delta$ of the graph.

\begin{theorem}
	\label{th:upper_gen}
	For any $p,q\in(0,1/2)$,
	if we have $p_u^t \in [\frac{p}{\Delta_P}, \frac{q}{\Delta_P}]$ 
	for all $u$ and $t$ and for $\Delta_P$
	defined as in \eqref{eq:max-potential} in a general local interaction potential game, then
	the expected convergence time is $O\left(\frac{n\Delta_P^2}{p(1-2q)}\right)$.
\end{theorem}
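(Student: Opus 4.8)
The plan is to mimic the proof of Lemma~\ref{lem:upper_simple}, establishing that under the stated hypothesis the dynamics are $\delta$-improving for $\delta = \Theta(p(1-2q)/\Delta_P)$, and then invoke Theorem~\ref{th:upper_any_game} together with the bound $M_0 \le n\Delta_P$ on the initial potential (since $P(c) = \sum_{uv} P_{uv}(c_u,c_v) \le \sum_u \sum_{v \in N(u)} \Delta_{P_{uv}} \le n\Delta_P$). So the whole content is the $\delta$-improving estimate; the convergence time $O(M_0/\delta) = O(n\Delta_P^2 / (p(1-2q)))$ then follows immediately.

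To get the $\delta$-improving bound, fix a non-equilibrium configuration $c$ reached at time $t$ and condition on $F_c^t$. For each player $u$, let $B_u$ be the set of players that change strategy at this step (each independently, player $v$ with probability $p_v^t$), and let $\Delta P_u$ be the change in the edge-potentials incident to $u$ caused by $u$ alone switching to its best response (a quantity determined by $c$); if $u$ is at a best response already, $\Delta P_u = 0$, and otherwise $\Delta P_u \le -1$ since potentials are integer-valued. The key decomposition is to write $P^{t+1} - P^t = \sum_{uv \in E} \bigl(P_{uv}(c^{t+1}_u, c^{t+1}_v) - P_{uv}(c_u,c_v)\bigr)$ and split each edge term according to whether zero, one, or two of its endpoints are active. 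Edges with no active endpoint contribute $0$. An edge with exactly one active endpoint $u$ contributes exactly the change due to $u$'s unilateral move, and summing these ``single-move'' contributions over all configurations of active sets, one isolates a term of the form $\sum_u p_u^t \Delta P_u \le -\sum_{u \text{ not at best response}} p_u^t \le -(p/\Delta_P)\,|U^t|$ where $U^t$ is the set of players not playing a best response. The error terms come from edges $uv$ with both endpoints active simultaneously: for such an edge the actual change can differ from the sum of the two unilateral changes by at most $2\Delta_{P_{uv}}$ in absolute value, and this happens with probability $p_u^t p_v^t$. Hence
\[
E[P^{t+1}-P^t \mid F_c^t] \;\le\; -\sum_{u \in U^t} p_u^t \;+\; \sum_{uv \in E} 2\,\Delta_{P_{uv}}\, p_u^t p_v^t \,.
\]
Now bound the error sum: grouping by $u$ and using $p_v^t \le q/\Delta_P$, $\sum_{v \in N(u)} \Delta_{P_{uv}} \le \Delta_P$, the inner sum $\sum_{v\in N(u)} 2\Delta_{P_{uv}} p_v^t \le 2q$. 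Since every $u$ contributing to the error sum with a nonzero term must have an active unstable neighbor, but more simply one can charge: for the standard argument it suffices to note that a better-responding player must have at least one incident edge whose potential it strictly decreases, so one reproduces the ``more conflicting than non-conflicting'' inequality in the generalized form, concluding $E[P^{t+1}-P^t \mid F_c^t] \le -\sum_{u\in U^t} p_u^t(1-2q) \le -(p(1-2q)/\Delta_P)|U^t| \le -p(1-2q)/\Delta_P$ whenever $c$ is not an equilibrium (so $|U^t|\ge 1$). This gives $\delta = p(1-2q)/\Delta_P$.

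The main obstacle is the careful accounting of the ``two active endpoints'' error term, exactly as in Lemma~\ref{lem:upper_simple} where the analogous $\sum_{uv\in C_2} 2p_u^t p_v^t$ appears. In the symmetric case one used the combinatorial fact that each unstable node sees strictly more conflicting than non-conflicting edges; here the analogue is that when $u$ switches to a best response, the total drop in $\sum_{v\in N(u)} P_{uv}$ is at least $1$, which must be re-derived from the potential-game identity~\eqref{eq:potential-games} applied edgewise via~\eqref{eq:local-interacion-potential}. One must also be careful that a better (not necessarily best) response still yields a strict decrease of at least $1$ in the integer-valued potential, and that the contribution of a single active endpoint to its incident edges equals the global unilateral potential change — this is where the decomposition~\eqref{eq:local-interacion-potential} is essential. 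Finally, the factor $2$ in the error bound (versus the $2$ in~\eqref{eq:deltapot2}) together with $p,q < 1/2$ is what forces the hypothesis $q < 1/2$ rather than $q < 1$; this should be pointed out. Everything else — linearity of expectation over the independent activations, and the final substitution of the bounds on $p_u^t, p_v^t$ — is routine.
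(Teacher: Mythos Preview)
Your proposal is correct and follows essentially the same route as the paper's Lemma~\ref{lem:upper_gen}: decompose the expected potential change edge-by-edge, isolate the unilateral-move contribution $\sum_{u\in U^t} p_u^t \Delta P_u \le -\sum_{u\in U^t} p_u^t$, bound the ``both endpoints move'' correction by $2\Delta_{P_{uv}} p_u^t p_v^t$ per edge, and finish using $\sum_{v\in N(u)}\Delta_{P_{uv}}\le \Delta_P$ together with $p_v^t\le q/\Delta_P$. The only substantive difference is that the paper regroups the correction $\sum_{uv\in C_2} 2\Delta_{P_{uv}} p_u^t p_v^t$ as a double sum over unstable vertices (each $C_2$ edge counted once from each endpoint, absorbing the factor $2$) to obtain the slightly sharper $\delta = p(1-q)/\Delta_P$, whereas your cruder charging yields $p(1-2q)/\Delta_P$; since the theorem is stated with $1-2q$, your bound already suffices.
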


\ifshort
\else
We shall prove later that $n\Delta_P$ is an upper bound on the global potential. Therefore to prove the theorem it suffice to show that in expectation the global potential improves as follows:

\begin{lemma}
	\label{lem:upper_gen}
	Any dynamics satisfying the hypothesis of Theorem~\ref{th:upper_gen} is $\delta$-improving for $\delta=p(1-q)/\Delta_P$.
\end{lemma}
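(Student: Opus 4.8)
The plan is to mirror the proof of Lemma~\ref{lem:upper_simple}, replacing the conflicting‑edge count by the edge potentials $P_{uv}$. Fix a configuration $c$ that is not an equilibrium and is reached at time $t$, and condition on $F_c^t$. Let $U^t$ be the set of nodes that have a better response to $c$ (the analogue of ``unstable''), and for each $u\in U^t$ let $b_u$ be the strategy $u$ plays when active (the fixed better/best response to $c$); since $b_u$ depends only on $c$, the activation events $\{u\text{ active at }t\}_u$ are mutually independent given $F_c^t$. Write $P^{t+1}-P^t=\sum_{uv\in E(G)}X_{uv}$ with $X_{uv}=P_{uv}(c_u^{t+1},c_v^{t+1})-P_{uv}(c_u,c_v)$, and compute $E[X_{uv}\mid F_c^t]$ by cases on which of $u,v$ lie in $U^t$ and get activated, exactly as in \eqref{eq:deltapot}. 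An edge with no endpoint in $U^t$ contributes $0$; an edge with exactly one endpoint $u\in U^t$ contributes $p_u^t\bra{P_{uv}(b_u,c_v)-P_{uv}(c_u,c_v)}$ (the other endpoint never moves); and an edge with both endpoints in $U^t$ contributes
\[
p_u^t\bra{P_{uv}(b_u,c_v)-P_{uv}(c_u,c_v)}+p_v^t\bra{P_{uv}(c_u,b_v)-P_{uv}(c_u,c_v)}+p_u^tp_v^t\,\Lambda_{uv},
\]
where $\Lambda_{uv}:=P_{uv}(b_u,b_v)-P_{uv}(b_u,c_v)-P_{uv}(c_u,b_v)+P_{uv}(c_u,c_v)$; here $p_u^tp_v^t$ is the probability that both $u$ and $v$ are active, and $\Lambda_{uv}$ is the mixed second difference measuring how much the joint move differs from the sum of the two individual moves.

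Summing over all edges, $E[P^{t+1}-P^t\mid F_c^t]$ decomposes as a first‑order term $\sum_{u\in U^t}p_u^t\sum_{v\in N(u)}\bra{P_{uv}(b_u,c_v)-P_{uv}(c_u,c_v)}$ plus a correction term $\sum_{uv:\,u,v\in U^t}p_u^tp_v^t\,\Lambda_{uv}$. For the first‑order term I would invoke the \emph{global} potential identity \eqref{eq:potential-games}: letting $c'$ be $c$ with $u$ playing $b_u$, only edges incident to $u$ change, so $\sum_{v\in N(u)}\bra{P_{uv}(c_u,c_v)-P_{uv}(b_u,c_v)}=P(c)-P(c')=PAY_u(c')-PAY_u(c)$, which is a strictly positive integer because $u$ has a better response and $P$ is integer‑valued; hence it is $\ge 1$ and the first‑order term is at most $-\sum_{u\in U^t}p_u^t$. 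For the correction term I would bound $\Lambda_{uv}\le 2\Delta_{P_{uv}}$ (a sum of four values in $[0,\Delta_{P_{uv}}]$ taken with signs $+,-,-,+$), rewrite the edge sum over ordered pairs so that the factor $2$ is absorbed, namely $\sum_{uv:\,u,v\in U^t}2\Delta_{P_{uv}}p_u^tp_v^t=\sum_{u\in U^t}p_u^t\sum_{v\in N(u)\cap U^t}\Delta_{P_{uv}}p_v^t$, and finally use $p_v^t\le q/\Delta_P$ together with $\sum_{v\in N(u)}\Delta_{P_{uv}}\le\Delta_P$ from \eqref{eq:max-potential} to conclude that the correction term is at most $q\sum_{u\in U^t}p_u^t$.

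Combining the two bounds yields $E[P^{t+1}-P^t\mid F_c^t]\le-(1-q)\sum_{u\in U^t}p_u^t$, and since $p_u^t\ge p/\Delta_P$ and $|U^t|\ge 1$ (the configuration is not an equilibrium), this is at most $-p(1-q)/\Delta_P$, so the dynamics are $\delta$‑improving with $\delta=p(1-q)/\Delta_P$. The main obstacle I anticipate is handling the simultaneous‑move corrections $\Lambda_{uv}$ cleanly: one has to check that the two first‑order contributions of a doubly‑unstable edge really do reassemble into the global per‑node potential drops (so that \eqref{eq:potential-games}, not merely the edge‑game identity, applies), and one must carry out the edge‑to‑ordered‑pair bookkeeping carefully so that the crude bound $\Lambda_{uv}\le 2\Delta_{P_{uv}}$ still produces a factor $q$ (rather than $2q$) and hence exactly the claimed $\delta$. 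Everything else is linearity of expectation plus the independence of activations given $F_c^t$.
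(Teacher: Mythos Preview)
Your proposal is correct and follows essentially the same route as the paper's proof: the paper likewise writes the expected potential change edge by edge, separates the per-node first-order drop (using that each unstable node's better response decreases the global potential by at least $1$) from the second-order correction on doubly-unstable edges, bounds that correction by $2\Delta_{P_{uv}}$ via the same four-term second difference you call $\Lambda_{uv}$, and then converts the edge sum to an ordered-pair sum to absorb the factor $2$ and obtain the $(1-q)$ coefficient. The only differences are notational (the paper uses $\m_{uv}^u,\m_{uv}^v,\nu_{uv}^{uv}$ and the edge classes $C_1,C_2$), so there is nothing substantive to add.
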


\ifshort
\else
\begin{proof} 
  Consider the event $F_c^t$ where a configuration $c$ is reached at time $t$.
	Let  $P^t:=P(c^t)=\sum_{uv\in E(G)}P_{uv}(c_u^t,c_v^t)$ denote the global
	potential of the game in configuration $c$.
	A node is unstable when it can improve its total payoff 
	by changing $c_u^t$ to some $c'_u$, which corresponds to an equal improvement on the global potential of the game according to \eqref{eq:potential-games}.
	We then let $$\m_{uv}^u=-(P_{uv}(c'_u,c_v^t)-P_{uv}(c_u^t,c_v^t))$$ denote the
	improvement obtained on the game with a neighbor $v$, and the sum $\sum_{v\in N(u)}\m_{uv}^u\ge 1$
	being the total improvement.
	Node $u$ then changes for strategy $c'_u$ with probability $p_u^t$.
	
	Let $C_1$ denote the edges $uv$ such that $u$ is unstable at time $t$
	and $v$ is not. Let $C_2$ denote the edges $uv$ such that both $u$ and $v$
	are unstable. The set of unstable nodes at time $t$ is denoted by $U^t$.
	By linearity of expectation, we have
	\begin{equation}
	\label{eq:deltapot2:gen}
	E[P^{t+1}-P^t\mid F_c^t] =
	-\sum_{uv\in C_1} p_u^t\m_{uv}^u
	- \sum_{uv\in C_2} (p_u^t(1-p_v^t)\m_{uv}^u + p_v^t(1-p_u^t)\m_{uv}^v
	- p_u^tp_v^t \nu_{uv}^{uv})
	\end{equation}
	where $\nu_{uv}^{uv} = P_{uv}(c'_u,c'_v) - P_{uv}(c^t_u,c^t_v)$ denotes the variation in the
	potential of edge $uv$ when both $u$ and $v$ change their colors.
	
	As each unstable node expects a gain of at least 1 (since the potential takes integer values),  we have
	\begin{equation}
	\label{eq:active2}
	\sum_{uv\in C_1} p_u^t\m_{uv}^u + \sum_{uv\in C_2} (p_u^t\m_{uv}^u + p_v^t\m_{uv}^v)
	\ge \sum_{u\in U^t} p_u^t  \enspace .
	\end{equation}
	By combining \eqref{eq:deltapot2:gen} and \eqref{eq:active2} we obtain
	\[
	E[P^{t+1}-P^t\mid F_c^t] \le
	-\sum_{u\in U^t} p_u^t
	+ \sum_{uv\in C_2} p_u^tp_v^t (\m_{uv}^{u} + \m_{uv}^{v} + \nu_{uv}^{uv})  \enspace .
	\]
	The last factor in the second sum simplifies to
	\[
	\m_{uv}^{u} + \m_{uv}^{v} + \nu_{uv}^{uv} =
	P_{uv}(c^t_u,c^t_v) + P_{uv}(c'_u,c'_v) - P_{uv}(c'_u,c^t_v) - P_{uv}(c^t_u,c'_v) \leq 2 \Delta_{P_{uv}} \ .
	\]
	We thus obtain
	\[
	E[P^{t+1}-P^t\mid F_c^t] \le
	-\sum_{u\in U^t} p_u^t
	+ \sum_{uv\in C_2} 2p_u^tp_v^t \Delta_{P_{uv}}
	= \sum_{u\in U^t} p_u^t  (-1 + \sum_{v|uv\in C_2} p_v^t \Delta_{P_{uv}})  \enspace .
	\]
	Since $\sum_{v|uv\in C_2} \Delta_{P_{uv}}\le \Delta_P$,  $p_u^t\ge \frac{p}{\Delta_P}$, and $p_v^t \le \frac{q}{\Delta_P}$, we obtain
	$
	E[P^{t+1}-P^t\mid F_c^t] \le -p(1-q)\frac{|U^t|}{\Delta_P}  .
	$
	This completes the proof of the lemma.
\qed\end{proof}
\fi 

Theorem~\ref{th:upper_gen} then follows from Theorem~\ref{th:upper_any_game} as we observe that  the global potential is bounded from above by
$
\sum_{uv\in E(G)} \Delta_{P_{uv}} = \sum_u \sum_{v| uv\in E(G)} \Delta_{P_{uv}}/2 \le n \Delta_P/2 .
$
\fi

\ifshort
\smallskip
Since local interaction potential games include \emph{max-cut games}, which are notoriously PLS-complete \cite{SchYan91},  one cannot hope to have convergence  time polynomial independent of `$\Delta_P$' in general.  Local interaction games also include finite \emph{opinion games} \cite{FerGolVen12} and, in particular, 
$
16\Delta \leq \Delta_P \leq 16(\Delta+1)
$, where $\Delta$ is the maximum degree of the underlying graph (details in \cite{full}).  Theorem~\ref{th:upper_gen}  implies:
\else
\subsection{Two Examples of Local Interaction Potential Games}
In this section we discuss two examples of local interaction potential games and how the result of Theorem~\ref{th:upper_gen} relates to them.
\subsubsection{Finite Opinion Games \cite{FerGolVen12}}\label{app:opinion-games}
Consider a game where each player must decide between two opinions, $0$ or $1$, and each player has some internal belief $b_u\in (0,1)$. Since there are only two strategies, $c_u \in \{0,1\}$, better-response and best-response coincide. According to \cite{FerGolVen12}, the payoff of player $u$ depends on the opinion of her neighbors and on her own belief, 
\[
PAY_u(c) = -\left((c_u - b_u)^2 + \sum_{v \in N(u)} (c_u - c_v)^2 \right)\enspace .
\]
and game is a potential game, with the potential function being
\[
P(c) = C(c) + \sum_u (c_u - b_u)^2,
\]
where $C(c)$ is the number of conflicting edges, that is, the edges whose endpoints have different opinions.  We can easily see that these are local interaction potential games by viewing the game as follows:
\begin{itemize}
	\item Between $u$ and every neighbor $v \in N(u)$, we play the symmetric coordination game (payoff is $1$ if they have the same opinion, and $0$ otherwise). 
	\item Each player $u$ is connected to her own \emph{opinion node}, that is, a dummy node $u^*$ who has only strategy $b_u$ available. Between $u$ and $u^*$ we play a ``trivial'' coordination game in which $u$ has payoff $P_{uu^*}(c_u)=-(c_u-b_u)^2$.
\end{itemize}
Finally, in better- or best-response dynamics, one can restrict to $b_u \in \{1/4,3/4\}$ \cite{FerGolVen12}, which makes it possible to have integral potential (simply multiply all payoffs by $16$).
Overall, we can bound the quantity $\Delta_P$ in \eqref{eq:max-potential} as
\[
16\Delta \leq \Delta_P \leq 16(\Delta+1)
\]
where $\Delta$ is the maximum degree in the network (without the dummy nodes $u^*$). 
Theorem~\ref{th:upper_gen} then yields the following:
\fi

\begin{corollary}
	In finite opinion games on networks of maximum degree $\Delta$, the expected converge time of independent better-response dynamics is $O(n \Delta^2)$ whenever $p_u^t = \frac{\alpha}{\Delta}$ for some $\alpha\in [\frac{p}{16}, \frac{q}{16}]$ 
	with $p,q\in(0,1/2)$.
\end{corollary}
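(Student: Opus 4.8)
The plan is to treat the finite opinion game as the local interaction potential game spelled out just above the statement, and then invoke Theorem~\ref{th:upper_gen} on it; everything else is a matter of reading off the two structural parameters that theorem needs, namely $\Delta_P$ and the number of nodes. So first I would fix the encoding: on every real edge play the symmetric coordination game, and to every player $u$ attach a private ``opinion node'' $u^*$ with the single strategy $b_u \in \{1/4, 3/4\}$ and edge potential $16(c_u-b_u)^2 \in \{1,9\}$, all payoffs having been multiplied by $16$ so that every $P_{uv}$ is integral and non-negative. Since an opinion node has only one available strategy it can never be unstable, so it contributes nothing to the set $U^t$ in Lemma~\ref{lem:upper_gen}; it does, however, count as a node of the game (the extended game has $2n$ nodes) and it enters the definition of $\Delta_P$.

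Next I would pin down $\Delta_P = \max_u \sum_{v\in N(u)} \Delta_{P_{uv}}$. The sum over an opinion node is at most $9$, whereas the sum over a real node $u$ of degree $\delta_u$ is $16\delta_u$ plus at most $9$ from the edge $uu^*$; hence the maximum is attained at a real node of degree $\Delta$, giving $16\Delta \le \Delta_P \le 16\Delta + 9 \le 16(\Delta+1)$ --- exactly the sandwich already recorded above. With $p_u^t = \alpha/\Delta$ and $\alpha \in [\tfrac{p}{16},\tfrac{q}{16}]$ we get $p_u^t \in [\tfrac{p}{16\Delta}, \tfrac{q}{16\Delta}]$, and the inequalities $16\Delta \le \Delta_P \le 16(\Delta+1)$ put this interval inside $[\tfrac{p'}{\Delta_P}, \tfrac{q'}{\Delta_P}]$ for $p' := p$ and $q' := q\,\Delta_P/(16\Delta) \le q(1+1/\Delta)$. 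For $\Delta$ large enough we have $p', q' \in (0,\tfrac12)$ (the finitely many small-degree cases being absorbed by shrinking $q$ a hair, or simply noting $n\Delta^2 = O(n)$ there), so Theorem~\ref{th:upper_gen} applies to the $2n$-node extended game and yields expected convergence time $O\!\paren{\tfrac{(2n)\,\Delta_P^2}{p'(1-2q')}}$.

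Substituting $\Delta_P = \Theta(\Delta)$ and treating $p,q$ (hence $p',q'$) as constants collapses this to $O(n\Delta^2)$, the claim. The one spot that takes genuine care --- the main, if modest, obstacle --- is the bookkeeping around the dummy opinion nodes: they must be present to make the game a local interaction potential game, to get the right $\Delta_P$, and to get the $\le (2n)\Delta_P/2$ bound on the global potential used inside Theorem~\ref{th:upper_gen}, yet they must be kept out of the ``unstable node'' accounting of Lemma~\ref{lem:upper_gen}. Checking that all rescaled edge potentials are integral and non-negative, and that the small inflation $q' \le q(1+1/\Delta)$ stays below $\tfrac12$, is then all that stands between the corollary and a one-line appeal to Theorem~\ref{th:upper_gen}.
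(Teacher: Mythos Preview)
Your proposal is correct and is essentially the paper's own argument spelled out: the paper simply declares the result an ``immediate corollary of Theorem~\ref{th:upper_gen}'' after recording $16\Delta \le \Delta_P \le 16(\Delta+1)$, and you have filled in precisely the bookkeeping (dummy nodes, integrality after the factor $16$, and the interval check on $p_u^t$) that makes that appeal legitimate. Your only extra care --- the inflation $q' \le q(1+1/\Delta)$ --- is in fact harmless, since the proof of Lemma~\ref{lem:upper_gen} gives $\delta = p(1-q)/\Delta_P$ and needs only $q<1$, not $q<1/2$; so even that caveat can be dropped.
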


\ifshort
\else
\fi

The next example says that one cannot hope to get polynomial convergence time independent of `$\Delta_P$' in general.  

\subsubsection{Max-Cut Games \cite{SchYan91}}\label{ex:max-cut}
Consider a complete \emph{weighted} graph in which every edge $e=(u,v)$ corresponds to a symmetric coordination game rescaled by the weight $w_e$ of this edge,
\[
\begin{game}{2}{2}[][][]
& $B$ & $W$ \\
$B$ & $w_e,w_e$    & $0,0$\\
$W$ & $0,0$    & $w_e,w_e$
\end{game}
\] 
The two possible strategies of each player ($B$ or $W$) determine a partition of the players into two groups, and the potential is simply the value of the cut of the resulting partition (sum of edge weights between the two groups).
These problems are PLS-complete \cite{SchYan91} meaning that one does not expect to have an algorithm that computes a Nash equilibrium in time polynomial in the number of players, regardless of the weights.

Max-cut games are hard when the weights are arbitrary, that is, when the $\Delta_P$ term is not polynomially-bounded. In such instances, even \emph{centralized algorithms} are not expected to find a Nash equilibrium in time polynomial in the number of players.

\section{Conclusion}
This work provides bounds on the time to converge to a (pure Nash) equilibrium when players are active \emph{independently} with some probability and they better or best respond to each others current strategies. Our study focuses on a natural (sub)class of potential games, namely, local interaction potential games. 
\ifshort
The bounds suggest that the time to converge to an equilibrium \emph{must} depend on the \emph{degree} of the nodes in the underlying network 
(cf. Thm.~\ref{th:upper_simple}, Thm.~\ref{th:upper_gen} and Cor.~\ref{cor:lower-bound}).
\else
The bounds suggest that the time to converge to an equilibrium \emph{must} depend on the \emph{degree} of the nodes in the underlying network:
\begin{itemize}
	\item These (distributed) dynamics can \emph{converge quickly} if players are \emph{lazy}, that is, if the probability of being active is inversely proportional to the maximum degree (the precise bounds are given by Theorems~\ref{th:upper_simple} and \ref{th:upper_gen}). 
	\item Conversely, \emph{non-lazy} dynamics can take \emph{exponential} time even in simple cases, where non-lazy means that the probability of being active is too high with respect to the network degree (e.g., when this probability is $1/\Delta^\alpha$ for some $\alpha<1/3$ and $\Delta$ being the maximum degree -- see Corollary~\ref{cor:lower-bound}). 
\end{itemize}
This suggests a sort of threshold effect due to the maximum degree of the network. This parameter has a natural interpretation as it corresponds to the number of players (nodes) that can affect the payoff of a single player (node).  
\fi

\ifshort
Since our bounds hold for local interaction potential games, it would be interesting to investigate whether analogous results hold for \emph{general} potential games. 
Here a relevant notion is that of graphical games \cite{Kea01} and the results in \cite{BabTam14}. 
It would  also be interesting to sharpen some of our bounds to show that $p \simeq 1/\Delta$ is essentially the threshold between fast and slow convergence, and to investigate the range $p\in [1/n, 1/n^{1/3}]$ (cf. Thm~\ref{th:upper_simple} and Cor~\ref{cor:lower-bound}).
\else
Since our bounds hold for local interaction potential games, it would be interesting to investigate whether analogous results hold for \emph{general} potential games
(note that one can always construct a graph representing the dependencies between players, by connecting two players whenever payoff of one depends also on the strategy of the other). 
Here a relevant notion is that of graphical games \cite{Kea01} and the results in \cite{BabTam14}. 
It would  also be interesting to sharpen some of our bounds to show that $p \simeq 1/\Delta$ is essentially the threshold between fast and slow convergence: Is it the case that, for every $\alpha<1$, if all nodes are active with some probability $p \simeq 1/\Delta^{\alpha}$, then the dynamics does not converge in polynomial time in some graphs of maximum degree $\Delta$?
Finally, it would be interesting to  investigate the range $p\in [1/n, 1/n^{1/3}]$ (according to  Theorem~\ref{th:upper_simple} and Corollary~\ref{cor:lower-bound}, this is where convergence time seems to change from polynomial to exponential).
\fi

\smallskip
\noindent
\paragraph{Acknowledgments}
\small
We thank Damien Regnault and Nicolas Schabanel for inspiring discussions on closely related problems, and an anonymous reviewer for pointing out the last open question. Part of this work has been done while the first author was at LIAFA,  Universit\'e Paris Diderot, supported by the French ANR Project DISPLEXITY.

\ifshort
\bibliographystyle{abbrv}
\else 
\bibliographystyle{plain}
\fi 

\bibliography{lazy_graph_coord_CIAC_cameraready}

\ifshort
\end{document}
\else
\fi

\newpage
\appendix

\section{Postponed proofs}

\subsection{Proof of Theorem~\ref{th:upper_any_game}}\label{sec:proof:th_upper_any_game}

We make use of the following lemma which derives 
from classical martingale theory (we include a proof for the sake of completeness).

\begin{lemma}
	\label{lem:martingale}
	Let $(X^t)_{t\in\N}$ a
	sequence of discrete random variables with values in $\{0,\ldots, M\}$,
	and let $T = \min\set{t : X^t = 0}$ be the stopping time
	defined as the random
	variable for the first time $t$ where $X^t= 0$. 
	Suppose that for some $\epsilon >0$  
	\[E[X^{t+1} - X^t \mid X^0=x^0 \wedge \ldots \wedge X^t=x^t]\le -\eps\] 
	for all tuples of values $x^0,\ldots,x^t \in \{1,\ldots, M\}$. Then
	we have $E[T] \le \frac{E[X^0]}{\eps}$.
\end{lemma}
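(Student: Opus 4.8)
The plan is to convert the supermartingale-with-drift hypothesis into a genuine supermartingale by adding a linear-in-time correction, and then invoke the optional stopping theorem. Concretely, I would define $Y^t := X^{\min(t,T)} + \eps \cdot \min(t,T)$, i.e. the process that tracks $X^t + \eps t$ until the stopping time $T$ and is then frozen. The hypothesis $E[X^{t+1}-X^t \mid X^0=x^0,\ldots,X^t=x^t] \le -\eps$ for all tuples with every $x^i \ge 1$ says exactly that, as long as we have not yet hit $0$, the expected one-step increment of $X^t$ is at most $-\eps$; adding $\eps$ per step compensates this, so $(Y^t)_{t\in\N}$ is a supermartingale with respect to the natural filtration. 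On the event $\{t \ge T\}$ the process $Y$ does not move, so the supermartingale inequality is trivially an equality there; on $\{t < T\}$ it follows from the drift hypothesis. Hence $E[Y^t] \le E[Y^0] = E[X^0]$ for all $t$.

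Next I would extract the bound on $E[T]$. Since $X^s \ge 0$ always, we have $Y^t \ge \eps\cdot\min(t,T)$, so $\eps\, E[\min(t,T)] \le E[Y^t] \le E[X^0]$, giving $E[\min(t,T)] \le E[X^0]/\eps$ uniformly in $t$. Letting $t\to\infty$ and applying the monotone convergence theorem to the nondecreasing nonnegative sequence $\min(t,T)$, we conclude $E[T] \le E[X^0]/\eps$. (In particular $T<\infty$ almost surely.) I would state the monotone convergence step explicitly since it is the only place the limit in $t$ is taken.

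The main thing to be careful about — rather than a deep obstacle — is the bookkeeping around the stopping time: one must argue that $(Y^t)$ is a supermartingale \emph{including} the frozen regime $t\ge T$, which is why the definition uses $\min(t,T)$ rather than $t$ throughout, and one must check that the conditioning in the hypothesis (on the full history $X^0=x^0,\ldots,X^t=x^t$ with all coordinates $\ge 1$) is exactly what is needed to dominate the conditional expectation given the filtration on the event $\{T>t\}$. The boundedness $X^t \le M$ is not actually needed for this argument (it only guarantees, together with the drift, that things are well-behaved), but it does no harm; I would simply not use it, or remark that it guarantees $Y^t$ is integrable so that all the expectations above are finite and the manipulations are justified. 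Everything else is a routine application of optional stopping / the supermartingale convergence setup, so no single step should be genuinely hard.
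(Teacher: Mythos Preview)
Your argument is correct and is in fact cleaner than the paper's. You build the stopped, drift-corrected process $Y^t = X^{t\wedge T} + \eps\,(t\wedge T)$, use only the elementary supermartingale inequality $E[Y^t]\le E[Y^0]$ for deterministic $t$, then bound $\eps\,E[t\wedge T]\le E[X^0]$ via $X\ge 0$ and pass to the limit by monotone convergence. The paper instead proceeds in two stages: it first establishes $E[T]<\infty$ by a separate argument that uses the upper bound $X^t\le M$ to lower-bound the probability of a one-step decrease of size $\eps/2$ (yielding a crude geometric tail for $T$), and only then applies Doob's optional stopping theorem to $Y^t$ under the hypotheses ``$E[T]<\infty$ and bounded increments'' to get $E[Y^T]\le E[Y^0]$ and hence $E[T]\le E[X^0]/\eps$. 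Your route avoids both the preliminary finiteness step and the invocation of Doob, and, as you note, does not actually need the bound $M$ beyond guaranteeing integrability of $Y^t$; the paper's route has the minor side benefit of producing an explicit (if loose) exponential tail bound on $T$ in terms of $M$ and $\eps$.
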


\begin{proof}
	We first show $E[T]<\infty$. As the variables $X^t$ are bounded we can obtain
	a lower bound on the probability that $X^t$ decreases by $\eps/2$ at time $t$:
	for any tuple of positive values $(x^0,\ldots,x^t)$, we consider the event
	$F=\set{X^0=x^0 \wedge \ldots \wedge X^t=x^t}$ and prove
	\begin{equation}
	\label{eq:prob_decr}
	P\bra{X^{t+1} - X^t < -\eps/2 \mid F } \ge \frac{\eps}{2M} \enspace .
	\end{equation}
	From the definition of conditional expectation, we have
	\[
	E[X^{t+1} - X^t \mid F ] = \frac{\int_F (X^{t+1} - X^t) dP}{P[F]} \enspace .
	\]
	The assumption on the expected variation of $X^t$ at time $t$ then implies
	\[
	-\eps P[F] \ge \int_{\set{X^{t+1} - X^t < -\eps/2}\cap F} (X^{t+1} - X^t) dP
	+ \int_{\set{X^{t+1} - X^t \ge -\eps/2}\cap F} (X^{t+1} - X^t) dP  \enspace .
	\]
	Using $X^{t+1} - X^t \ge -M$, we get
	\[
	-\eps P[F] \ge -M P[\set{X^{t+1} - X^t < -\eps/2}\cap F] 
	-\frac{\eps}{2} P[\set{X^{t+1} - X^t \ge -\eps/2}\cap F] \enspace .
	\]
	Equation~\ref{eq:prob_decr}
	is then deduced from $P[F] \ge P[\set{X^{t+1} - X^t \ge -\eps/2}\cap F]$.
	
	Now, for integral $\tau>1$, we have $X^{t+\tau} \le X^t -\eps\tau/2$
	when $X^{s+1}-X^s < -\eps/2$ for $s\in\set{t,\ldots,t+\tau-1}$. We thus have:
	\[
	P[X^{t+\tau} \le X^t -\eps\tau/2 \mid T\ge t+\tau-1] \ge (\eps/2M)^\tau \enspace .
	\]
	Fix $\tau > 2M/\eps$. As $X^t\le M$, we have $X^t-\eps\tau/2<0$ and obtain:
	\[
	P[T\ge t+\tau \mid T\ge t] 
	= 1 - P[X^{t+\tau}=0 \wedge T\ge t+\tau-1\mid T\ge t] 
	< 1 - (\eps/2M)^\tau  \enspace .
	\]
	For integral $k$, we thus have $P[T\ge k\tau] < \alpha^k$
	where $\alpha=1-(\eps/2M)^\tau$.
	Using $P[T=t]\le P[T\ge k\tau]$ for $k\tau\ge t$, we can then write 
	$E[T]\le \sum_{k\ge 1}k\tau\alpha^k\le \alpha\tau/(1-\alpha)^2<\infty$.
	
	\medskip
	
	We finally apply Doob's optional stopping theorem to the random variables
	$Y^t$ defined as follows. Given the outcomes $x^0,\ldots,x^{t-1}$
	of $X^0,\ldots,X^{t-1}$, we set $Y^t=0$ if $x^s=0$ for some $s<t$
	and $Y^t=X^t+\eps t$ otherwise.
	First note that $Y^t$ is a supermartingale:
	Consider values $y^0,\ldots,y^t$. 
	If $y^s=0$ for some $s\le t$, we have $y^{s'}=0$ for $s'\ge s$ and
	$E[Y^{t+1}\mid Y^0=y^0\wedge\ldots\wedge Y^t=y^t]=0=y^t$.
	Otherwise, $y^0,\ldots,y^t$ are all positive and
	$ E[Y^{t+1}\mid Y^0=y^0\wedge\ldots\wedge Y^t=y^t]
	= y^t + E[X^{t+1}-y^t+\eps \mid X^0=y^0\wedge\ldots\wedge X^t=y^t]\le y^t $ 
	according to the assumption on the expected variation of $X^t$ at time $t$.
	Second, we have $E[T]<\infty$ and $|Y^{t+1}-Y^t|\le M+\eps$.
	Doob's theorem thus applies: $Y^T$ is well defined
	and $E[Y^T]\le E[Y^0]$. That is $E[X^T+\eps T]\le E[X^0]$.
	As $X^T= 0$, linearity of expectation yields $E[T] \le E[X^0]/\eps$.
\qed\end{proof}

We are now in a position to prove Theorem~\ref{th:upper_any_game}.

\begin{proof}[of Theorem~\ref{th:upper_any_game}]
	Set $X^t=P^t$ if there are unstable nodes and $X^t=0$ otherwise
        where $P^t$ denotes the value of the potential of the game at time $t$.
	For a given sequence of positive integers $x^0,\ldots,x^t$, consider the
	event $F\subset\Omega$ where all configuration sequences in $F$ 
	satisfy $X^0=x^0,\ldots,X^t=x^t$. 
	For a given
	 configuration $c$ that can occur at time $t$ in $F$, 
         recall that $F_c^t$ denote the set of sequences 
	$c^0,c^1,\ldots$ in $F$ where $c^t=c$. Note that 
	\begin{align*}
		E[X^{t+1} - X^t \mid X^0=x^0 \wedge \ldots \wedge X^t=x^t] &= E[P^{t+1}-P^t | F] \enspace .
	\end{align*}
	Since $F=\cup_c F_c^t$ is a 
	disjoint union we have 
	\begin{align*}		 
		E[P^{t+1}-P^t | F] &= 	\frac{\sum_c E[P^{t+1}-P^t | F_c^t] P[F_c^t]}{P[F]}
		\intertext{since no considered $c$ is an equilibrium and the dynamics is $\eps$-improving}
		E[P^{t+1}-P^t | F] & \leq \frac{\sum_c -\eps P[F_c^t]}{P[F]} = -\eps  \enspace .
	\end{align*}
	By Lemma~\ref{lem:martingale} we get $E[T] \le \frac{M_0}{\eps}$ where $T$ is the expected convergence time of the dynamics (by definition of random variables $X^t$) and $M_0=E[P^0]$ is the expected potential at time 0.
\qed\end{proof}

\subsection{Proof of Lemma~\ref{le:cycle}}\label{app:proof:le:cycle}
\begin{proof}
	By definition of $CYCLE(t)$, at at time $t$ a fraction $\alpha'=\alpha_L$ of nodes in $L$ has color $c$ and a fraction $\alpha_R>1/2$ of nodes in $R$ has color $\overline{c}$. The best response for nodes in $L$ is thus $\overline{c}$. Therefore at time  $t+1$ the fraction $\alpha''$ of nodes in $L$ having color $\overline{c}$ is
	\[
	\alpha'' = 1 - \alpha' + Y/n
	\]  
	where $Y$ is the number of nodes among the $\alpha' n$ of color $c$ that are activated. The expectation of $Y$ is $\mu=p\alpha' n$ and
	Chernoff bounds imply
	\begin{align*}
	P[Y/n\geq (1-\delta)p\alpha'] &\geq  1 -\exp\left(-\frac{\delta^2}{2}\mu\right), \\
	P[Y/n\leq (1+\delta)p\alpha'] &\geq 1 -  \exp\left(-\frac{\delta^2}{3}\mu\right).
	\end{align*}
	We next show two implications saying that the dynamics keeps oscillating `around' the value $\alpha$ as long as $Y$ is close to its expectation according to $\delta=\frac{\epsilon}{1+\epsilon}$:
	\begin{align*}
	Y/n \geq (1-\delta)p\alpha' \text{ and } \alpha'\leq&(1+\epsilon)\alpha & \Rightarrow& & \alpha'' \geq & 1 -\alpha' + p\alpha'  - \delta p \alpha'\\
	& & & & =& 1-\alpha'(1-p) - \delta p \alpha'\\
	& & & & \geq &1-(1+\epsilon)\alpha(1-p)- \delta p (1+\epsilon)\alpha
	\\
	& & & & = &1-\alpha(1-p)  - \epsilon\alpha(1-p) -\delta p (1+\epsilon)\alpha \\
	\text{(since $1 - \alpha(1-p)=\alpha$)}& & & & = &\alpha (1- \epsilon)+ \epsilon\alpha p-\delta p (1+\epsilon)\alpha \\
	\text{(since $\delta=\frac{\epsilon}{1+\epsilon}$)} & & & & \geq &\alpha (1- \epsilon)  \enspace .
	\end{align*}
	Similarly,
	\begin{align*}
	Y/n \leq (1+\delta)p\alpha' \text{ and } \alpha'\geq&(1-\epsilon)\alpha & \Rightarrow& & \alpha'' \le & 1 -\alpha' + p\alpha'  + \delta p \alpha'\\
	& & & & =&1-\alpha'(1-p) + \delta p \alpha'\\
	\text{(using $\alpha'\leq(1+\epsilon)\alpha$)}
	& & & & \leq &1-(1-\epsilon)\alpha(1-p)+ \delta p (1+\epsilon)\alpha
	\\
	& & & & = &1-\alpha(1-p)  + \epsilon\alpha(1-p) +\delta p (1+\epsilon)\alpha \\
	\text{(since $1 - \alpha(1-p)=\alpha$)}& & & & = &\alpha (1+ \epsilon)- \epsilon\alpha p+\delta p (1+\epsilon)\alpha \\
	\text{(since $\delta=\frac{\epsilon}{1+\epsilon}$)}& & & & \leq &\alpha (1+\epsilon)  \enspace .
	\end{align*}
	A symmetric argument applies to $L$ and $R$ exchanged.  We have thus shown that given $CYCLE(t)$ event $CYCLE(t+1)$ holds unless $Y > (1+\delta)\mu$ or $Y < (1-\delta)\mu$, or a symmetric situation holds with $L$ and $R$ exchanged. Then the union bound implies  $P[\neg CYCLE(t+1)| \ CYCLE(t)]\leq 4\exp\left(-\frac{\delta^2}{3}\mu\right)$, which proves the lemma. 
\qed\end{proof}

\subsection{Proof of Corollary~\ref{cor:lower-bound}}\label{app:proof::lower-bound}

\begin{proof}
	In  Lemma~\ref{le:cycle} we have  \[\delta^2\mu = \left(\frac{p}{3+p}\right)^2 p(1+\frac{p}{3})\alpha n= \frac{p^3n}{(3+p)3(2-p)}>\frac{p^3n}{24}\geq \frac{n^{1-3c}}{24}\]
	and the expected number of steps to converge to an equilibrium is $$\Omega(\exp(\delta^2\mu/3))=\Omega(\exp(n^{1-3c}/72))=e^{\Omega(n^{1-3c})}.$$
	This completes the proof.
\qed\end{proof}

\subsection{Proof of Theorem~\ref{th:adaptive}}\label{app:proof:th:adaptive}

\begin{proof}
	We adapt the proof of Lemma~\ref{lem:upper_simple} and show that these dynamics are $\delta$-improving for $\delta=p(1-2q)$.
	Recall that $C_2$ denotes the conflicting edges between two unstable nodes
	at time $t$. We thus have $d_u^t=\card{\set{v \mid uv\in C_2}}$. 
	From Equation~\ref{eq:deltapot2} in the proof of Lemma~\ref{lem:upper_simple},
	we have 
	\[
	E[C^{t+1}-C^t \mid F_c^t] \le
	\sum_{uv\in C_2} 2p_u^tp_v^t - \sum_{u\in U^t} p_u^t  \enspace .
	\]
	In the first sum, we can 
	associate the weight $2p_u^tp_v^t$ to node $u$ if $p_u^t\ge p_v^t$ and
	node $v$ otherwise. We thus get 
	\[
	E[C^{t+1}-C^t \mid F_c^t] \le
	\sum_{u\in U^t} p_u^t\paren{-1 + \sum_{v\mid uv\in C_2, p_v^t\le p_u^t} 2p_v^t}
	\le \sum_{u\in U^t} p_u^t\paren{-1 + 2p_u^td_u^t} \enspace .
	\]
	Using $p_u^t\in [\frac{p}{d_u^t+1}, \frac{q}{d_u^t+1}]$,
	we obtain
	\[
	E[C^{t+1}-C^t \mid F_c^t] \le
	-p(1-2q)\sum_{u\in U^t} \frac{1}{d_u^t+1}  \enspace .
	\]
	As $\card{U^t} > \max_u d_u^t$, we finally have
	\[
	E[C^{t+1}-C^t \mid F_c^t] \le -p(1-2q)  \enspace .
	\]
	The theorem follows from Theorem~\ref{th:upper_any_game} and from the fact that the potential is the number of conflicting edges. 
\qed\end{proof}

\subsection{Proof of Theorem~\ref{th:degree}}\label{app:proof:th:degree}

\begin{proof}
	Similarly to the proof of Theorem~\ref{th:adaptive}, we obtain
	\[
	E[C^{t+1}-C^t \mid F_c^t] \le
	-p(1-2q)\sum_{u\in U^t} \frac{1}{\delta_u} \le -p(1-2q) \frac{|U^t|}{\Delta} 
	\enspace .
	\]
	We then conclude similarly as in the proof of Theorem~\ref{th:upper_simple}.
\qed\end{proof}

\end{document}